\newcommand{\la}[1]{\langle{#1}|}
\newcommand{\ra}[1]{|{#1}\rangle}
\newcommand{\avg}[1]{\langle{#1}\rangle}
\newcommand{\kt}{k_{\text{B}}T}
\newtheorem{defi}{Definition}[section]
\newtheorem{theorem}{Theorem}[section]
\begin{document}

\title{Efficient Quantum Work Reservoirs at the Nanoscale}

\author{Jinghao Lyu}
\email{jolyu@ucdavis.edu}
\affiliation{Complexity Sciences Center and Department of Physics and Astronomy, University of California, Davis, One Shields Avenue, Davis, CA 95616}

\author{Alexander B. Boyd}
\email{alboyd@tcd.ie}
\affiliation{School of Physics, Trinity College Dublin,  College Green, Dublin 2, Ireland}

\author{James P. Crutchfield}
\email{chaos@ucdavis.edu}
\affiliation{Complexity Sciences Center and Department of Physics and Astronomy, University of California, Davis, One Shields Avenue, Davis, CA 95616}


\bibliographystyle{unsrt}

\begin{abstract}
When reformulated as a resource theory, thermodynamics can analyze system behaviors in the single-shot regime. In this, the work required to implement state transitions is bounded by $\alpha-$R\'enyi divergences and so differs in identifying efficient operations compared to stochastic thermodynamics. Thus, a detailed understanding of the difference between stochastic and resource-theoretic thermodynamics is needed. To this end, we explore reversibility in the single-shot regime, generalizing the two-level work reservoirs used there to multi-level work reservoirs. This achieves reversibility in any transition in the single-shot regime. Building on this, we systematically develop multi-level work reservoirs in the nondissipation regime with and without catalysts. The resource-theoretic results show that two-level work reservoirs undershoot Landauer's bound, misleadingly implying energy dissipation during computation. In contrast, we demonstrate that multi-level work reservoirs achieve Landauer's bound while producing arbitrarily low entropy.
\end{abstract}

\preprint{arxiv.org:2305.17815}

\date{\today}


\maketitle

\section{Introduction}

The Second Law of thermodynamics states that the total entropy of a system and its surrounding environment increases when undergoing a transformation---the entropy production of any thermodynamic transformation is nonnegative \cite{esposito2010entropy}.  
This places strong resource bounds on computations performed by a Hamiltonian system coupled to a single thermal bath at temperature $T$. Specifically, the work that can be extracted in transforming a system between potentially nonequilibrium states (from $\rho$ to $\rho'$) is bounded above by the reduction in nonequilibrium free energy \cite{parrondo2015thermodynamics,alicki2004thermodynamics,sagawa2013unitary}:
\begin{align}
\avg{W}_{\mathrm{max}} & = F(\rho) - F(\rho') \nonumber \\
    & = k_{B}T \left[D_{1}(\rho||\tau)-D_{1}(\rho'||\tau)\right]
    ~.
\label{eqn:regularwork}
\end{align}
Here, $k_{B}$ is Boltzmann's constant, $F(\rho) = \mathrm{Tr}(\rho H) - T S(\rho)$ is the nonequilibrium free energy with $S(\rho)\equiv -\text{Tr}\left[ \rho \log \rho \right]$ the von Neumann entropy, $D_{1}(\rho||\tau)\equiv \text{Tr}\left[  \rho \ln {\rho}-\rho \ln{\tau}\right]$ is the relative entropy between $\rho$ and $\tau$, and $\tau$ the Gibbs state with Hamiltonian $H$. This result is a general expression of Landauer's principle, which relates information processing to the energy requirements for a computation \cite{landauer1961irreversibility}.

From the perspective of thermodynamic control, we can achieve Landauer's bound on work \cite{jun2014high} by evolving the system under a time-dependent Hamiltonian $H_S(t)$, while maintaining weak coupling to a thermal reservoir \cite{jarzynski2000hamiltonian}. However, the resulting unitary operator from this Hamiltonian control does not necessarily preserve the total energy of the thermal bath and the system. Rather, the extracted work is the negative total energy difference of the system and bath together \cite{deffner2013information}. Stochastic thermodynamics addresses work production as the result of external control, without explicitly describing the battery that stores the harvested work energy. This begs the question: What are the thermodynamic limits when accounting for the dynamics of the battery that drives a state transition forward? 
This requires a more detailed accounting of resources.

Recently, thermodynamics was reformulated as a resource theory---alternately called single-shot thermodynamics, resource theory of athermality, or simply nanoscale thermodynamics \cite{janzing2000thermodynamic, dahlsten2011inadequacy, brandao2013resource, horodecki2013fundamental, brandao2015second}. In resource theory, work must be stored in specific subsystems that we refer to as \emph{work reservoirs} and function as batteries to power state transitions.  In parallel to thermal reservoirs, a work reservoir is defined by a specific relationship between its energy and entropy: a change in energy corresponds to zero entropy change.  External control cannot violate energy conservation. That is, the unitary evolution of bath, system, and work reservoir together must commute with the joint free Hamiltonian.

Typically, a work reservoir is a two-level quantum system and the corresponding work is called \emph{deterministic work} \cite{horodecki2013fundamental}. The work reservoir starts in one pure state at the beginning and ends in another pure state. The work is defined as the energy gap between those two levels. The deterministic work that can be extracted from the state transition $\rho \to \tau$ is \cite{horodecki2013fundamental}: 
\begin{align}
    W_{\mathrm{one-shot}}^{\mathrm{ext}}=k_{B}T D_{0}(\rho || \tau)
    ~,
\label{eq:RT_OneShot}
\end{align}
where $D_{\alpha}(\rho||\tau) \equiv \frac{1}{\alpha-1}\log \text{Tr} \left[ \rho^\alpha \tau^{1-\alpha} \right]$ is the R\'enyi $\alpha-$divergence between state $\rho$ and $\tau$ \cite{petz1986quasi}. 

This work extraction result differs from the bound set by the Second Law of thermodynamics in Eq. \eqref{eqn:regularwork}, which would yield the result $D_{1}(\rho||\tau)$. Recall that $\alpha=0$ R\'enyi divergence vanishes when both $\rho$ and $\sigma$ have full rank. So in the deterministic work setup, if we have a full rank state $\rho$, there is no work we can extract from it.

However, there is a connection between these two work values. The thermodynamic bound is recovered by considering many copies of $\rho$ and tolerating error $\epsilon$.  If we loosen the requirement such that the final state can be $\epsilon$-close to the copies of thermal states, the work can be described by the smoothed version of $\alpha=0$ R\'enyi divergence \cite{tomamichel2009fully,horodecki2013fundamental}:
\begin{align}
    \lim_{\epsilon \to 0}\lim_{n \to \infty} \frac{1}{n} D_{0}^{(\epsilon)}(\rho^{\otimes n}||\tau^{\otimes n}) = D_{1}(\rho||\tau)~.
\end{align}
We expect this since the classical thermodynamic result is supposed to be correct for  a large ensemble of
identical systems. Since the R\'enyi divergence is nondecreasing as a function of order $\alpha$ \cite{van2014renyi}, we have:
\begin{align}
W_{\mathrm{one-shot}}^{\mathrm{ext}} & = \kt D_{0}(\rho\|\tau) \nonumber\\
   & \leq \kt D_{1}(\rho\|\tau)
  ~.
\end{align}
That is, the resource-theoretic bound on work extractable from state $\rho$ is tighter than Landauer's bound of stochastic thermodynamics.

The two-level constraint also leads to tighter bounds in state formation. 
The deterministic work to form system state $\rho$ in single-shot thermodynamics is \cite{horodecki2013fundamental}:
\begin{align}
    W^{\mathrm{form}}_{\mathrm{one-shot}}= -\kt D_{\infty}(\rho||\tau)
    ~.
\label{eq:RT_OneShot_TwoLevel}
\end{align}
Here, the minus sign indicates that work must be supplied to form the state $\rho$. Similar to extraction, one-shot analysis puts a tighter bound on state formation than Landauer's bound:
\begin{align}
    W^\text{form}_\text{one-shot} & = -k_BT D_\infty(\rho||\tau) \nonumber \\
    & \leq -k_B T D_1 (\rho||\tau)
    ~.
\end{align}

In some cases, $W^{\mathrm{form}}_{\mathrm{one-shot}}$ and $W^{\mathrm{ext}}_{\mathrm{one-shot}}$ equal the average results from thermodynamics. Landauer's bound on erasure \cite{landauer1961irreversibility} and the energy that can be stored in a work reservoir by randomizing a pure bit are both $\kt\log2$ \cite{renes2014work}. However,  resource-theoretic results, such as in Eqs. (\ref{eq:RT_OneShot}) and (\ref{eq:RT_OneShot_TwoLevel}) with two-level work reservoirs, usually undershoot Landauer's bound \cite{chubb2018beyond}.  Energy must be dissipated during state transitions \cite{alhambra2016fluctuating,halpern2015introducing,dahlsten2017entropic,halpern2018maximum,biswas2022fluctuation,taranto2023landauer}.

The following establishes that the disparity arises from assuming that work is stored in a two-level system. We show how to approach the thermodynamic limit of Landauer's bound in nanoscale thermodynamics by abandoning two-level work reservoirs.  When using multi-level work reservoirs as shown in Fig. \ref{fig:generalpic}, thermodynamically efficient state transformations are directly implementable.

\begin{figure}[t]
    \centering
    \includegraphics[width=\columnwidth]{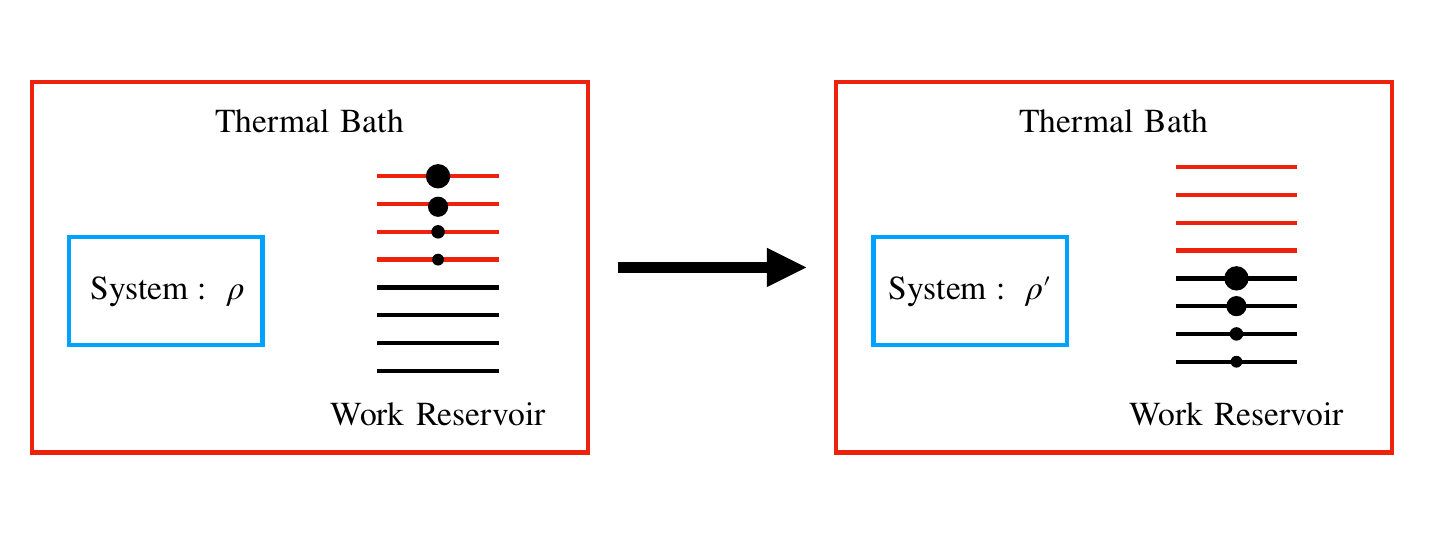}
    \caption{State transitions with multi-level work reservoirs rather than two-level work reservoirs. We show that for any transition $\rho \to \rho'$, there is a multi-level work reservoir such that the dissipation can be arbitrarily small.}
\label{fig:generalpic}
\end{figure}

Our development is organized as follows. Section \ref{section:framework} sets up the basic framework. Section \ref{section:entropyproduction} reviews the definition of entropy production at both the macroscopic scale and the nanoscale and gives an equivalent condition of approaching zero dissipation at the nanoscale. Section \ref{section:beyonddeterministicwork} generalizes the two-level work reservoirs typically employed in nanoscale thermodynamics. It gives an explicit construction for a multi-level work reservoir that can be used to approach zero entropy production for any state transition. Section \ref{section:efficientworkreservoirwithcatalysts} goes on to study efficient work reservoirs in the presence of catalysts and introduces an alternative way to describe almost-nondissipation scenarios.

\section{Framework}\label{section:framework}


The total system consists of system $S$, work reservoir $W$, and thermal bath $B$ with Hamiltonians $H_{S}$, $H_{W}$, and $H_{B}$, respectively. Initially, they are uncorrelated. The initial state is $\rho_{SWB} = \rho_{S} \otimes \rho_{W} \otimes \tau_{B}$, where $\tau_{B}$ is the Gibbs state of the thermal bath at temperature $T$. The three subsystems interact via Hamiltonian $H_{int}$. They evolve by the unitary operator $ U=\mathrm {T} \exp \left({-{\frac {i}{\hbar }}\int H dt}\right)$, where $\mathrm{T}$ is the time-ordering operator and $H$ is the total Hamiltonian $H=H_{S}+H_{B}+H_{W}+H_{int}$. In thermodynamics, there is often no need to include a work reservoir and $U$ does not preserve total energy in general. In resource theory, though, we specify that $[U, H_{S}+H_{B}+H_{W}]=0$---strict energy conservation. The final state is given by $\rho'_{SWB}=U\rho_{SWB} U^{\dagger}$.

Here, we focus on states that are incoherent in energy. Since incoherent states are diagonal in the energy eigenstates, we identify a quantum state $\rho$ with the vector $\boldsymbol{p}$ of its eigenvalues, a Hamiltonian $H$ with its energy levels $\boldsymbol{E}$, and the eigenstates of Hamiltonian $H$ with a classical set $\mathcal{S}=\{1,2, \cdots \}$. Throughout, greek letter $\rho$ denotes a state, bold $\boldsymbol{p}$ denotes a probability distribution, and $p_{i}$/$(\boldsymbol{p})_{i}$ the $i$-th component in the latter. $\tau$ denotes the Gibbs state and $\boldsymbol{\tau}$ the corresponding distribution. Subscripted notation $(\cdot)_{S}$ or $(\cdot)_{B}$ refers to the system or the thermal bath, respectively, while $(\cdot)_{SW}$ denotes the joint distribution of the system and the work reservoir. Notation without subscripts refers to a general state. Primed notation $(\cdot)'$ refers to a final state.



\section{Approach to zero entropy production and work bounds}
\label{section:entropyproduction}

This section reviews the bounds mentioned above and entropy production in single-shot thermodynamics. 

Thermodynamic entropy production $\Sigma$ is defined as \cite{deffner2011nonequilibrium,landi2021irreversible}: 
\begin{align}
    \Sigma = \Delta S_{S} +  \frac{Q}{T}
    ~,
\label{eqn:entropyproduction}
\end{align}
where $\Delta S_{S}$ is the system's entropy change and $Q$ is the amount of heat transferred from the system to the thermal bath. 

We assume that the system and bath are initially uncorrelated and the bath is in equilibrium, such that $\rho_{SB}= \rho_{S} \otimes \tau_{B}$. The global unitary operator $U$ acts on the system and bath to extract work. Using Klein's inequality---$\mathrm{Tr}(\rho \log \rho)\geq \mathrm{Tr}(\rho \log \sigma)$---we can show that the entropy production $\Sigma$ is nonnegative \cite{sagawa2012thermodynamics}.
Define the missing energy of the total system as work production $W= -Q - \Delta U_{S}$, where $\Delta U_{S}$ is the system's energy change, and rewrite Eq. \eqref{eqn:entropyproduction} as:
\begin{align}
    \Sigma = \frac{1}{T}(T \Delta S_{S} - \Delta U_{S} - W) \geq 0
    ~.
\end{align}
This gives the familiar's thermodynamic bound $W\leq -\Delta F_{S}$, where $\Delta F_{S}= \Delta U_{S} -T \Delta S_{S}$. The equal sign holds if and only if the entropy production vanishes. 

Resource theory limits thermodynamic evolution to unitary operators $U$ that commute with the total free Hamiltonian. So, there can be no ``missing energy'': $-Q-\Delta U_{S}=0$. Such operations on the system are called \emph{thermal operations} (TO). 

Without work input, the constraint on state transitions is \emph{thermomajorization} \cite{horodecki2013fundamental, ruch1978mixing}. That is, to transition from $\rho_{S}$ to $\rho_{S}'$ , $\rho_{S}$ must \emph{thermomajorize} $\rho_{S}'$. There is a geometric way to determine this condition: thermomajorization \emph{curves} reveal whether a state $\rho_{S}$ thermomajorizes $\rho_{S}'$ \cite{horodecki2013fundamental}. 

For any state $\rho$, the thermomajorization curve is constructed as follows. Suppose the eigenvalues of $\rho$ are $\boldsymbol{p}=\{{p}_{i}\}_{i\in \mathcal{S}}$ and the corresponding energy levels are $\boldsymbol{E}=\{e_{i}\}_{i\in \mathcal{S}}$. We first rank $\{{p}_{i}\}_{i=1}^{n}$ in descending order of ${p}_{i}e^{\beta e_{i}}$. This is called \emph{$\beta-$order}. The thermomajorization curve of state $\rho$ is formed by connecting points:
\begin{align}
    (0,0) ~\text{and}~
    \Big( \sum_{i=1}^{k} e^{-\beta {e}_{i}^{\downarrow}}, \sum_{i=1}^{k} {p}_{i}^{\downarrow} \Big)_{k=1}^{n}
\end{align}
piecewise linearly where $\downarrow$ means that ${p}_{i}$ and ${e}_{i}$ have been $\beta-$ordered. The thermomajorization curve of state $\rho$ is a monotonic concave-down curve $f_{\boldsymbol{p},\boldsymbol{E}}(x)$ that interpolates between $(x, f(x))= (0,0)$ and $(x, f(x))= (Z_{S},1)$, where $Z_{S}=\sum_{i\in\mathcal{S}} \exp{(- e_{i}/\kt)}$ is the system's partition function. Geometrically, to have a transition $\rho_{S} \to \rho_{S}'$ under a thermal operation, $\rho_{S}$ thermomajorization curve must lie above or on the curve of $\rho_{S}'$. (See Fig. \ref{fig:thermomajorizationcurvemainbody}.)

\begin{figure}[t]
    \centering
    \scalebox{0.8}{
    \begin{tikzpicture}
    \draw[black, very thick](0,0) rectangle(8,4) node [label={[shift={(0,-4.8)}]:{\small $Z_{S} $}}] {};
    \draw[blue, very thick](0,0)--(2,4);
    \draw[blue, very thick](2,4)--(8,4);
    \draw[red, very thick](0,0)--(8,4);
    \draw[teal, very thick](0,0)--(1,1);
    \draw[teal, very thick](1,1)--(3,2.332);
    \draw[teal, very thick](3,1+4*0.333)--(5,1+4*0.333+1);
    \draw[teal, very thick](5,1+4*0.333+1)--(8,4);
    
    \draw[dashed](2,4)--(2,0) node [label={[shift={(0,-0.8)}]:{ $e^{-\beta e_{1}}$}}] {};
    \draw[dashed](5,4)--(5,0) node [label={[shift={(0,-0.8)}]:{ $e^{-\beta e_{1}}+e^{-\beta e_{2}}$}}] {};
    \filldraw[blue] (0-0.07,0-0.07) rectangle (0+0.07,0+0.07);
    \filldraw[blue] (2-0.07,4-0.07) rectangle (2+0.07,4+0.07);  
    \filldraw[blue] (5-0.07,4-0.07) rectangle (5+0.07,4+0.07);
    \filldraw[blue] (7-0.07,4-0.07) rectangle (7+0.07,4+0.07);
    \filldraw[blue] (8-0.07,4-0.07) rectangle (8+0.07,4+0.07);
    \filldraw[red] (0,0) circle (2pt);
    \filldraw[red] (3,1.5) circle (2pt);
    \filldraw[red] (5,2.5) circle (2pt);
    \filldraw[red] (7,3.5) circle (2pt);
    \filldraw[red] (8,4) circle (2pt);
    \filldraw[teal] (0-0.07,0-0.040) -- (0+0.07,0-0.040) -- (0,0.080) -- cycle;
    \filldraw[teal] (1-0.07,1-0.040) -- (1+0.07,1-0.040) -- (1,1+0.080) -- cycle;
    \filldraw[teal] (3-0.07,2.332-0.040) -- (3+0.07,2.332-0.040) -- (3,2.332+0.080) -- cycle;
    \filldraw[teal] (5-0.07,3.332-0.040) -- (5+0.07,3.332-0.040) -- (5,3.332+0.080) -- cycle;
    \filldraw[teal] (8-0.07,4-0.040) -- (8+0.07,4-0.040) -- (8,4+0.080) -- cycle;
    \draw[black](2,4)--(0,4) node [label={[shift={(-0.45,-0.35)}]:{1}}] {};
    \filldraw[blue] (6.5-0.07,2-0.07) rectangle (6.5+0.07,2+0.07) node [label={[shift={(0.45,-0.4)}]:{ $\rho_{S}$}}] {};
    \filldraw[teal] (6.5-0.07,1.5-0.040) -- (6.5+0.07,1.5-0.040) -- (6.5,1.5+0.080) -- cycle node [label={[shift={(0.6,-0.3)}]:{ $\sigma_{S}$}}] {};
    \filldraw[red] (6.5,1) circle (2pt) node [label={[shift={(0.5,-0.3)}]:{ $\tau_{S}$}}] {};
    \end{tikzpicture}
    }
\caption{Thermomajorization curves of states: We show thermomajorization curves of three states $\rho_{S}$, $\sigma_{S}$, and $\tau_{S}$. $\rho_{S}$ is a pure state, $\sigma_{S}$ a general state and $\tau_{S}$ the Gibbs state. Applying the criterion, we can have transitions $\rho_{S}\to\sigma_{S}/\tau_{S}$ and $\sigma_{S}\to\tau_{S}$ under thermal operations.}
\label{fig:thermomajorizationcurvemainbody}
\end{figure}

Now, we are ready to study work extraction bounds in the single-shot regime. Consider a two-level work reservoir with Hamiltonian $H_{W}=W_{0} \ra{W_{0}}\la{W_{0}} + W_{1} \ra{W_{1}}\la{W_{1}}$. For a work extraction transition $(\rho_{S} \otimes |W_{0}\rangle \langle W_{0}|, H_{S}+H_{W}) \to (\tau_{S} \otimes |W_{1}\rangle \langle W_{1}|, H_{S}+H_{W})$ 
 to occur in single-shot thermodynamics, $\rho_{S} \otimes |W_{0}\rangle \langle W_{0}|$ must thermomajorize $\tau_{S} \otimes |W_{1}\rangle \langle W_{1}|$ and we have:
\begin{align}
    W & = W_{1}-W_{0}
     \leq D_{0}(\rho_{S} \| \tau_{S})
    ~.
\end{align}
(See Appendix \ref{appendix:thermaloperations} for details.)

In this case, the maximum work extractable from a state $(\rho_{S}, H_{S})$ cannot achieve the upper bound $-\Delta F_{S}$, because a two-level nanoscale work reservoir cannot approach zero entropy production for every work extraction. By contrast, stochastic thermodynamics approaches zero entropy production by employing a quasistatic process connecting the initial and final states \cite{deffner2019quantum}. 


Next, let us address how to compute the entropy production in the single-shot regime. The entropy production is still defined as in Eq. \eqref{eqn:entropyproduction}. Consider an energy preserving unitary operation such that $Q = -\Delta U_{S},$ where:
\begin{align}
\Delta U_{S} = \kt\Big(- \mathrm{Tr}(\rho_{S}' \log \tau_{S}) +\mathrm{Tr} (\rho_{S} \log \tau_{S})\Big)
  ~.
\end{align}
(Here, we assume there is no work reservoir. But if we wish to include one, we treat the work reservoir as part of the system.) Then we can write the entropy production of Eq. \eqref{eqn:entropyproduction} in an information-theoretic form \cite{riechers2021initial,reeb2014improved}: 
\begin{align}
\label{eqn:entropyproductioninfotheoritic}
    \Sigma & =-\Delta U_{S}/T +\Delta S_{S} \\
    & = D(\rho_{S} \| \tau_{S}) - D(\rho_{S}' \| \tau_{S})
    ~.
\end{align}
This represents the entropy produced when the system undergoes a Gibbs-preserving thermal operation, whose steady state $\tau_{S}$ produces zero entropy.  In essence, when there is no work reservoir to guide the transformation, any relaxation towards equilibrium corresponds to irreversibility.

Let thermal operation $\mathcal{E}$ transform $\rho_{S}$ to $\rho_{S}'$: $\mathcal{E}(\rho_{S})=\rho_{S}'$.  This thermal operation preserves the Gibbs state, such that $\mathcal{E}(\tau_{S}) = \tau_{S}$, and from the data processing inequality \cite{wilde2013quantum},
we have:
\begin{align}
    D(\rho_{S} \| \tau_{S}) & \geq D(\mathcal{E}(\rho_{S})\|\mathcal{E}(\tau_{S})) \\ 
    & = D(\rho_{S}'\|\tau_{S})
    ~.
\end{align}
Entropy production is always nonnegative in single-shot thermodynamics. Now, we are ready to state a theorem on approaching zero entropy production at the nanoscale.

\begin{theorem}
\label{thm:coincide}
Consider a $d-$dimensional system with Hamiltonian $H$. Given two states $\rho$ and $\sigma$, the following are equivalent:
\begin{enumerate}[label=(\alph*)]
\item \label{Coincide1} The thermomajorization curves of states $\rho$ and $\sigma$ coincide.
\item \label{Coincide2} There exists a thermal operation $\mathcal{E}$ such that $\mathcal{E}(\rho)$ can be arbitrarily close to $\sigma$ and the corresponding entropy production can be arbitrarily small.
\end{enumerate}
\end{theorem}


Theorem \ref{thm:coincide} is one of our main results. Appendix \ref{appendix:proof} gives the proof. Note that for two different states to have exactly same thermomajorization curve, there must be energy degeneracy in $H$ \cite{de2022geometric,mazurek2018decomposability}. Theorem \ref{thm:coincide} illustrates geometrically why the familiar thermodynamics bounds are not same as the bounds at the nanoscale. To approach the latter bounds, the entropy production needs to be arbitrarily small. Here, the work reservoir entropy change must be included:
\begin{align}
    \Sigma= \Delta S_{S} + \Delta S_{W} + \frac{Q}{T}
    ~.
\end{align}
Under deterministic work extraction, $\Delta S_{W}=0$ and the work reservoirs' initial and final states are pure states. They can only contract the system's thermomajorization curves along $x$-axis by a factor. And so, to approach zero entropy production, the system's initial thermomajorization curve must coincide with its final thermomajorization curve up to a contraction factor. This is not always possible. Fig. \ref{fig:example} depicts the situation. 

\begin{figure}[t]
    \centering
    \scalebox{1}{
    \begin{tikzpicture}
    \draw[black, very thick](0,0) rectangle(4,4) node [label={[shift={(0.4,-4.6)}]:{\small $2=Z_{S} $}}] {};
    \draw[blue, very thick](0,0)--(2,4);
    \draw[blue, very thick](2,4)--(4,4);
    \draw[red, very thick](0,0)--(4,4);
    \draw[green, very thick](0,0)--(2,4*0.666);
    \draw[green, very thick](2,4*0.666)--(4,4);
    
    \draw[dashed](2,4)--(2,0) node [label={[shift={(0.55,-0.6)}]:{\small $1=e^{-\beta 0}$ }}] {};
    \filldraw[blue] (0-0.07,0-0.07) rectangle (0+0.07,0+0.07);
    \filldraw[blue] (2-0.07,4-0.07) rectangle (2+0.07,4+0.07);  
    \filldraw[blue] (4-0.07,4-0.07) rectangle (4+0.07,4+0.07);
    \filldraw[red] (0,0) circle (2pt);
    \filldraw[red] (2,2) circle (2pt);
    \filldraw[red] (4,4) circle (2pt);
    \filldraw[green] (0-0.07,0-0.040) -- (0+0.07,0-0.040) -- (0,0.080) -- cycle;
    \filldraw[green] (4-0.07,4-0.040) -- (4+0.07,4-0.040) -- (4,4+0.080) -- cycle;
    \filldraw[green] (2-0.07,4*0.666-0.040) -- (2+0.07,4*0.666-0.040) -- (2,4*0.666+0.080) -- cycle;
    \draw[dashed](2,2)--(0,2) node [label={[shift={(-0.45,-0.45)}]:{\small $\frac{1}{2}$ }}] {};
    \draw[dashed](2,4*0.666)--(0,4*0.666) node [label={[shift={(-0.45,-0.45)}]:{\small $\frac{2}{3}$ }}] {};
    \draw[black](2,4)--(0,4) node [label={[shift={(-1.05,-0.45)}]:{\small $\sum_{i}p_{i}=1$ }}] {};
    \end{tikzpicture}
    }
\caption{Thermodynamics' bound cannot be achieved at the nanoscale: Consider a two-level system spanned by $\{\ra{0}, \ra{1}\}$ with $H_{S}=0$. The red circle is the thermomajorization curve of $\rho_{S}=\frac{1}{2}(\ra{0}\la{0}+\ra{1}\la{1})$. The blue square is the curve for $\rho_{S}'=\ra{0}\la{0}$. The green triangle is that of $\sigma_{S}=(\frac{1}{3}\ra{0}\la{0}+\frac{2}{3}\ra{1}\la{1})$. The red circle and blue square curves coincide with a two-level work reservoir. The corresponding transition $(\rho_{S},H_{S}) \to (\rho'_{S},H_{S})$ is the well-known Landauer's erasure. We can approach this bound arbitrarily closely. However, this cannot be done for the green triangle and blue square curves.}
\label{fig:example}
\end{figure}


\section{Beyond deterministic work}
\label{section:beyonddeterministicwork}

This section generalizes two-level work reservoirs in such a way that initial and final thermomajorization curves coincide. This achieves arbitrarily small entropy production for a transition. Before the general case, though, we first review an elementary example to give a simple picture.

\subsection{Example}

Consider Landauer's erasure with the initial distribution $\boldsymbol{p}_{S}=(\frac{1}{3},\frac{2}{3})$ stored in a two level system with trivial Hamiltonian $H=0$ and a four-level work reservoir with energy levels $\{W_{0},W_{1},W_{2},W_{3}\}$. We set the work reservoir's initial distribution to $\boldsymbol{p}_{W}=(r_{1}, r_{2}, 0, 0)$ and the final to $\boldsymbol{p}_{W}'=(0, 0, r_{1}, r_{2})$. Initially, the nonzero populations of the work reservoir are with the first half of energy levels and the final nonzero populations of the work reservoir are with the second set of energy levels. The work reservoir's entropy does not change overall. The total initial state is:
\begin{align}
\rho_{SW}=&(\frac{1}{3}\ra{0}\la{0}+\frac{2}{3}\ra{1}\la{1}) \otimes \\
& (r_{1}\ra{W_{0}}\la{W_{0}}+r_{2}\ra{W_{1}}\la{W_{1}})
\end{align}
and the final is:
\begin{align}
\rho_{SW}'=\ra{0}\la{0}\otimes (r_{1}\ra{W_{2}}\la{W_{2}}+r_{2}\ra{W_{3}}\la{W_{3}})
~.
\end{align}

First, consider the final state's thermomajorization curve. At most, it has two distinct slopes. For the two curves to coincide, the initial curve can contain at most two distinct slopes. One possibility is that the initial work reservoir's thermomajorization curve has one distinct slope. This leads to:
\begin{align}
    \frac{1}{3}r_{1}e^{\beta W_{0}} = \frac{1}{3}r_{2}e^{\beta W_{1}} &= r_{1}e^{\beta W_{2}} \\
    \frac{2}{3}r_{1}e^{\beta W_{0}} =\frac{2}{3}r_{2}e^{\beta W_{1}}&=r_{2}e^{\beta W_{3}} \\
    \frac{1}{3}r_{1}+\frac{1}{3}r_{2} &= r_{1} \\
    \frac{2}{3}r_{1}+\frac{2}{3}r_{2} &= r_{2}~.
\end{align}
The first two equations come from requiring the initial curve to have only two distinct slopes and the same slopes as the final curve's. And, the last two equations come from requiring the same $y$-coordinate change. Solving those equations gives:
\begin{align}
r_{1} & = \frac{1}{3} ~\text{and}~r_{2} = \frac{2}{3} \\
e^{-\beta W_{0}} & = a ,~ e^{-\beta W_{1}} = 2a ,~ e^{-\beta W_{2}} =3 a ,~\text{and}~
e^{-\beta W_{3}}=3a
    ,
\end{align}
where $a$ is an arbitrary positive number.

\begin{table*}[!t]
    \centering
    \scalebox{0.3}{
    \includegraphics[]{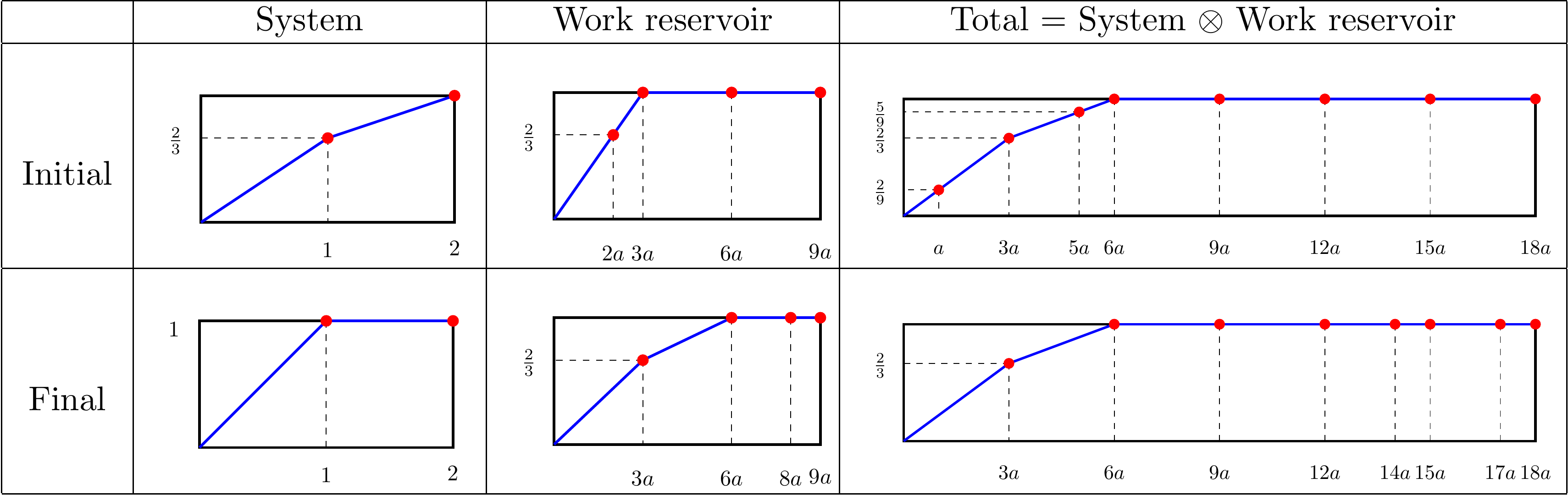}
    }
\caption{Efficient work reservoir for Landauer erasure: The first row shows thermomajorization curves of the initial system state $\rho_{S}=\frac{1}{3}\ra{0}\la{0}+\frac{2}{3}\ra{1}\la{1}$, the initial work state $\rho_{W}=\frac{1}{3}\ra{W_{0}}\la{W_{0}}+\frac{2}{3}\ra{W_{1}}\la{W_{1}}$, and the initial total state $\rho_{SW}=\rho_{S}\otimes\rho_{W}$. The second row shows thermomajorization curves of the final system state $\rho_{S}'=\ra{0}\la{0}$, the final work state $\rho_{W}'=\frac{1}{3}\ra{W_{2}}\la{W_{2}}+\frac{2}{3}\ra{W_{3}}\la{W_{3}}$, and the final total state $\rho'_{SW}=\rho_{S}'\otimes\rho_{W}'$.}
    \label{tab:lorenzcurveex}
\end{table*}

Table \ref{tab:lorenzcurveex} demonstrates that the initial and final curves coincide. The expected energy change in the work reservoir is:
\begin{align}
\avg{W} & =r_{1}(W_{2}-W_{0})+r_{2}(W_{3}-W_{1}) \\
  & =\kt \left( \frac{1}{3}\log\frac{1}{3}+\frac{2}{3}\log\frac{2}{3} \right)
  ~.
\end{align}
This is the system entropy change as expected. This demonstrates that energy levels $\boldsymbol{E}_{W}=\{W_{0},W_{1},W_{2},W_{3} \}$ with probability distributions $\boldsymbol{p}_{W}=(\frac{1}{3}, \frac{2}{3},0,0)$ and $\boldsymbol{p}'_{W}=(0,0,\frac{1}{3}, \frac{2}{3})$ form an efficient work reservoir for Landauer erasure with the initial distribution $\boldsymbol{p}_{S}=(\frac{1}{3},\frac{2}{3})$.

One subtlety to highlight is that, although the total curves coincide, with thermal operations we can only make the final state arbitrarily close to the desired state $\rho_{SW}'$. So, we cannot use exactly $\avg{W}$ to erase $\boldsymbol{p}_{S}$. Instead, we can use the amount of work arbitrarily close to $\avg{W}$ to erase $\boldsymbol{p}_{S}$ and then the corresponding entropy production will be arbitrarily small.

For simplicity, from now on we treat thermomajorization curves coinciding as the same as zero entropy production. Corresponding bounds on work can be computed by setting entropy production to be zero. However, we should keep in mind that the precise statement is that the entropy production can be arbitrarily small and the corresponding work can be arbitrarily close to the bounds.

A key observation from this example is that for the two total thermomajorization curves to coincide, the nonzero slope part of final work reservoir's curve must coincide with the nonzero slope part of the initial system's curve up to a scale constant. In the above example, the scale constant is $3a$. We further require that the nonzero slope part of initial work reservoir's curve coincide with the nonzero slope part of the final system's curve up to the same scale constant. Thus, the key step in constructing an efficient work reservoir for a state transformation is to find a suitable probability distribution for the work reservoir. And, then we can fine tune energy levels such that the work reservoir's thermomajorization curve can coincide with both the initial and final state's curves.

In the first example, the probability distribution we chose was $(\frac{1}{3},\frac{2}{3})$. We set the initial energy levels to be $W_{0}=-k_{B}T \log a$ and $W_{1}=-k_{B}T \log 2a$ and the final's to be $W_{2}=-k_{B}T \log 3a$ and $W_{3}=-k_{B}T \log 3a$. Under those parameters, the erasure is efficient; i.e., the entropy production vanishes.

\subsection{General efficient work reservoirs}

Now, we turn to develop efficient work reservoirs for arbitrary state transitions. First, we introduce a notation using tuples to aid in describing thermomajorization curves. Then, we present the definition of efficient work reservoirs and briefly discuss how to construct them. Recall that the thermomajorization curve $f_{\boldsymbol{p},\boldsymbol{E}}$ of a distribution $\boldsymbol{p}=\{p_i\}_{i \in \mathcal{S}}$ over the energy levels $\boldsymbol{E}=\{\epsilon_i\}_{i\in \mathcal{S}}$ can be derived from the collection of segments $\{(e^{-\beta \epsilon_i},p_i)\}_{i\in \mathcal{S}}$. Thermomajorization curve orders the segments from highest slope---the slope of $i$-th element is $p_i e^{\beta \epsilon_i}$---to lowest and then concatenates them end to end.

Consider a coarse-graining function $\lambda:\mathcal{S} \rightarrow \mathcal{S}'$ that defines a new distribution and energy landscape: $\boldsymbol{p}'=\lambda(\boldsymbol{p})=\{p'_j \}_{j \in \mathcal{S}'}$ and energy landscape $\boldsymbol{E}'=\lambda(\boldsymbol{E})=\{\epsilon'_j \}_{j \in \mathcal{S}'}$  via:
\begin{align}
p'_j & = \sum_{i \in \lambda^{-1}(j)} p_i \\
   e^{-\beta \epsilon'_j} & = \sum_{i \in \lambda^{-1}(j)} e^{-\beta \epsilon_i}
  ~,
\end{align}
where:
\begin{align}
    \lambda^{-1}(j) \equiv \{i | i \in \mathcal{S} , \lambda(i)=j\}
    ~.
\end{align}

If $\lambda$ only coarse-grains elements of $(\boldsymbol{p},\boldsymbol{E})$ whose segments have the same slope---meaning $\lambda(i)=\lambda(i')$ implies $p_ie^{\beta \epsilon_i}=p_{i'}e^{\beta \epsilon_{i'}}$---then the coarse-grained distribution and energies $(\lambda(\boldsymbol{p}),\lambda(\boldsymbol{E}))$ have the same thermomajorization curve $f_{\lambda(\boldsymbol{p}),\lambda(\boldsymbol{E})}=f_{\boldsymbol{p},\boldsymbol{E}}$. The segments $(e^{-\beta \epsilon_i},p_i)$ and $(e^{-\beta \epsilon_{i'}},p_{i'})$ of elements $i$ and $i'$ with the same slope in the thermomajorization curve comprise a long line segment with (width, height) = $(e^{-\beta \epsilon_{i}}+e^{-\beta \epsilon_{i'}},p_i+p_{i'})$.

Suppose $\lambda$ coarse-grains all segments with the same slopes. After the coarse-graining, the thermomajorization curve has $n$ distinct slopes, excluding the segments with slope zero. Let $\# f_{\boldsymbol{p},\boldsymbol{E}}=n$ denote the number of distinct slopes in $f_{\boldsymbol{p},\boldsymbol{E}}$ and $n$ tuples $\boldsymbol{f}_{\boldsymbol{p},\boldsymbol{E}}=\{(y_{i},k_{i})\}_{i=1}^{n}$ represent $f$ where $k_{i}$ is the $i$-th distinct slope and $y_{i}$ is the corresponding $y-$coordinate change. In some cases, we allow repeating slopes in $\boldsymbol{f}_{\boldsymbol{p},\boldsymbol{E}}$. 

For a composite system, the joint thermomajorization curve is constructed as follows. Given one distribution $\boldsymbol{p}_{S}=\{p_{i}\}_{i\in \mathcal{S}}$ over energy levels $\boldsymbol{E}_{S} =\{e_{i}\}_{i\in \mathcal{S}}$ with thermomajorization curve $\boldsymbol{f}_{\boldsymbol{p}_{S},\boldsymbol{E}_{S}}=\{(x_{i},k_{i})\}_{i\in\mathcal{S}}$ and another distribution $\boldsymbol{p}_{S'}=\{q_{i}\}_{i\in \mathcal{S}'}$ over energy levels $\boldsymbol{E}_{S'} =\{h_{i}\}_{i\in \mathcal{S}'}$ with thermomajorization curve $\boldsymbol{f}_{\boldsymbol{q}_{S'},\boldsymbol{H}_{S'}}=\{(y_{i},m_{i})\}_{i\in\mathcal{S'}}$, then the composite configuration is the probability distribution $\boldsymbol{p}_{SS'}$ over energy levels $\boldsymbol{E}_{SS'}$ where:
\begin{align}
    \boldsymbol{p}_{SS'} &= \{p_{i}q_{j}\}_{i\in \mathcal{S}, j \in \mathcal{S}'} \\ 
    \boldsymbol{E}_{SS'} &= \{e_{i}+h_{j}\}_{i\in \mathcal{S}, j \in \mathcal{S}'} ~,
\end{align}
and:
\begin{align}\label{eqn:compositethermalcurve}
    \boldsymbol{f}_{\boldsymbol{p}_{SS'},\boldsymbol{E}_{SS'}}=\{(x_{i}y_{j},k_{i}m_{j})\}_{i\in\mathcal{S},j\in\mathcal{S'}}~.  
\end{align}
Slopes may repeat in $\boldsymbol{f}_{\boldsymbol{p}_{SS'},\boldsymbol{E}_{SS'}}$.


With this enhanced notation, we now define multi-level work reservoirs.

\begin{defi}\label{def:mworkreservoir}(Multi-level Work Reservoirs)
A $2d-$level work reservoir $(\boldsymbol{p}_W, \boldsymbol{p}_W',\boldsymbol{E}_W)$ for a state transition $\boldsymbol{p}_{S} \to \boldsymbol{p}_{S}'$ in a system with energy levels $\boldsymbol{E}_S = \{e_s\}_{s \in \mathcal{S}}$ has initial distribution $\boldsymbol{p}_W = \{q_w \}_{w \in \mathcal{W}}$, final distribution $\boldsymbol{p}_W' = \{q_w' \}_{w \in \mathcal{W}}$, and energy eigenstates $\boldsymbol{E}_W=\{\epsilon_w \}_{w \in \mathcal{W}}$. Here, $\boldsymbol{p}_W$ and $\boldsymbol{p}_W'$ have the form of $\boldsymbol{p}_W=(\boldsymbol{r}, \boldsymbol{0})$ and $\boldsymbol{p}_W'=(\boldsymbol{0}, \boldsymbol{r})$, where $\boldsymbol{r}$ is a $d-$dimension probability distribution. The initial configuration of the system with the reservoir is $(\boldsymbol{p}_{SW},\boldsymbol{E}_{SW})$, where:
\begin{align}
    \boldsymbol{p}_{SW}&=\{ p_s q_w\}_{s \in \mathcal{S},w\in \mathcal{W}} \\
    \boldsymbol{E}_{SW}&=\{ e_s + \epsilon_w\}_{s \in \mathcal{S},w\in \mathcal{W}}
    ~.
\end{align}
The final configuration is $(\boldsymbol{p}'_{SW},\boldsymbol{E}_{SW})$, where:
\begin{align}
\boldsymbol{p}'_{SW}&=\{ p'_s q'_w \}_{s \in \mathcal{S},w\in \mathcal{W}} \\
    \boldsymbol{E}_{SW}&=\{ e_s+\epsilon_w \}_{s \in \mathcal{S},w\in \mathcal{W}}
    ~.
\end{align}
\end{defi}

This requires $\boldsymbol{p}_W$ and $\boldsymbol{p}_W'$ to have the forms $\boldsymbol{p}_W=(\boldsymbol{r}, \boldsymbol{0})$ and $\boldsymbol{p}_W'=(\boldsymbol{0}, \boldsymbol{r})$ so that the overall work reservoir's entropy change vanishes. This satisfies the stochastic thermodynamics' entropyless assumption for work reservoirs \cite{semaan2022homeostatic}. Furthermore, from our definition the initial nonzero distribution in the work reservoir occupies the first half of energy levels and the final occupies the second half of the energy levels. This leads immediately to the following definition.

\begin{defi}(Efficient Work Reservoirs)
\label{defi:eworkreservoir}
A work reservoir $(\boldsymbol{p}_W, \boldsymbol{p}_W',\boldsymbol{E}_W)$ is efficient for a state transition $\boldsymbol{p}_{S} \to \boldsymbol{p}_{S}'$ in a system with energy levels $\boldsymbol{E}_S $ if the thermomajorization curves of $(\boldsymbol{p}_{SW},\boldsymbol{E}_{SW}) $and $(\boldsymbol{p}'_{SW},\boldsymbol{E}_{SW})$ coincide.
\end{defi}

The previous example showed that the key to constructing an efficient work reservoir is to find a probability distribution $\boldsymbol{r}$ and energy levels $\boldsymbol{E}_{W}$ for work reservoirs such that its curve coincides with the final state's curve with the distribution $(\boldsymbol{r},\boldsymbol{0})$ and with the initial state's curve with the distribution $(\boldsymbol{0}, \boldsymbol{r})$ up to the same constant. Then the efficient work reservoir's initial and final thermomajorization curves mimic the system's final and initial thermomajorization curves as shown in Fig. \ref{fig:efficientreservoir}. 

\begin{figure}
    \centering
    \scalebox{1.0}{\includegraphics[width=\columnwidth]{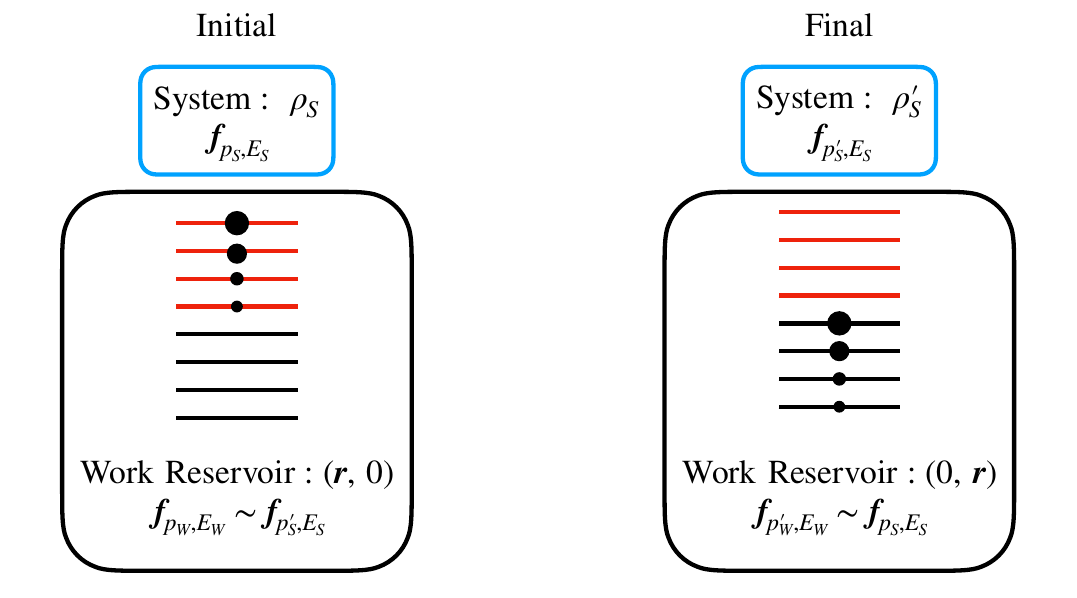}
    }
    \caption{One way for two total thermomajorization curves to coincide is to find a probability distribution $\boldsymbol{r}$ and energy levels $\boldsymbol{E}_{W}$ such that when $\boldsymbol{r}$ occupies the first half energy levels---i.e., $\boldsymbol{p}_{W}=(\boldsymbol{r},0)$---the thermomajorization curve $\boldsymbol{f}_{\boldsymbol{p}_{W},\boldsymbol{E}_{W}}$ coincides with the final system state's curve $\boldsymbol{f}_{\boldsymbol{p}_{S}',\boldsymbol{E}_{S}}$ up to a contraction factor---denoted  $\boldsymbol{f}_{\boldsymbol{p}_{W},\boldsymbol{E}_{W}}\sim\boldsymbol{f}_{\boldsymbol{p}_{S}',\boldsymbol{E}_{S}}$. When the probability distribution occupies the second half energy levels---i.e., $\boldsymbol{p}_{W}'=(0,\boldsymbol{r})$---the thermomajorization curve $\boldsymbol{f}_{\boldsymbol{p}_{W}',\boldsymbol{E}_{W}}$ coincides with the initial system state's curve $\boldsymbol{f}_{\boldsymbol{p}_{S},\boldsymbol{E}_{S}}$ up to the same contraction factor. With this, the two total curves coincide: $\boldsymbol{f}_{\boldsymbol{p}_{SW},\boldsymbol{E}_{SW}}=\boldsymbol{f}_{\boldsymbol{p}_{SW}',\boldsymbol{E}_{SW}}$.
    } 
    \label{fig:efficientreservoir}
\end{figure}

\subsection{Work extraction and state formation reservoirs}

We first study how to construct efficient work reservoirs for two kinds of state transitions: work extractions and state formations. For work extraction $(\boldsymbol{p}_{S},\boldsymbol{E}_{S})\to (\boldsymbol{\tau}_{S},\boldsymbol{E}_{S})$, suppose there are $m$ distinct slopes in the  thermomajorization curve $f_{\boldsymbol{p}_{S},\boldsymbol{E}_{S}}$ and $\boldsymbol{f}_{\boldsymbol{p}_{S},\boldsymbol{E}_{S}}=\{(r_{i},a_{i})\}_{i=1}^{m}$, where $m$ is the number of distinct slopes in the thermomajorization curve of the system. We now show that a work reservoir must have a dimension greater than $2(m-1)$ to achieve efficient work extraction. 

To see this, assume that an efficient work reservoir has dimension $2d \leq 2(m-1)$. Now, let the initial work reservoir probability distribution be $\boldsymbol{p}_{W}=(\boldsymbol{r},\boldsymbol{0})$,  the corresponding thermomajorization curve have $a$ distinct slopes, the final work reservoir probability distribution be $\boldsymbol{p}_{W}'=(\boldsymbol{0},\boldsymbol{r})$, and the corresponding thermomajorization curve have $b$ distinct slopes. Since the dimension of $\boldsymbol{r}$ is $d$. Then, we have $ a,b\leq d \leq (m-1)$. 

The final total probability distribution is $\boldsymbol{p}'_{SW}=\boldsymbol{\tau}_{S} \otimes \boldsymbol{p}'_{W}$. We have $\# f_{\boldsymbol{p}'_{S},\boldsymbol{E}_{S}} = 1$ and $\# f_{\boldsymbol{p}'_{W},\boldsymbol{E}_{W}} = b$. The number of distinct slopes of the thermomajorization curve $f_{\boldsymbol{p}'_{SW}, \boldsymbol{E}_{SW}}$ is $b$. The initial total probability distribution is $\boldsymbol{p}_{SW}=\boldsymbol{p}_{S} \otimes \boldsymbol{p}_{W}$. Since the number of distinct slopes in $\boldsymbol{p}_{S}$'s thermomajorization is $m$, we have $\# f_{\boldsymbol{p}_{SW},\boldsymbol{E}_{SW}}\geq m $. The equality holds if and only if the number of the segments of $\boldsymbol{p}_{W}$'s thermomajorization is 1. Since we have $b\leq m-1 < m$, it is impossible for curve $f_{\boldsymbol{p}_{SW}, \boldsymbol{E}_{SW}}$ to coincide with curve $f_{\boldsymbol{p}'_{SW}, \boldsymbol{E}_{SW}}$. Hence, the dimension of the efficient work reservoir is at least $2m$. 

With a $2m$ dimension work reservoir, we choose the probability distribution to be $\boldsymbol{r}=(r_{1}, \cdots, r_{m})$. We fine tune the first-half energy levels such that the initial work reservoir's curve only contains one slope which coincides with the final thermal state of the system's curve up to a constant. For the second-half energy levels, they are fine tuned such that the final work reservoir's curve coincides with $f_{\boldsymbol{p}_{S},\boldsymbol{E}_{S}}$ up to the same constant. (See TABLE \ref{table:minimumworkreservoir}).

The detailed calculation follows. Suppose the energy levels of the work reservoir are $\boldsymbol{E}_{W}=\{\epsilon_{1}, \cdots, \epsilon_{m}, \epsilon_{1}', \cdots, \epsilon_{m}'\}$. For the energy levels $\{\epsilon_{1}, \cdots, \epsilon_{m}\}$, we require:
\begin{align}
    e^{\beta \epsilon_{i}}=\frac{c}{r_{i}}
    ~,
\end{align}
where $c$ can be an arbitrary positive number. For the energy levels $\{\epsilon_{1}', \cdots, \epsilon_{m}'\}$, we stipluate:
\begin{align}
    e^{\beta \epsilon_{i}'}={c}{Z_{S}}\frac{a_{i}}{{r}_{i}}
    ~.
\end{align}
With our notation, we can verify that the two final curves coincide. For the initial setup:
\begin{align}
\boldsymbol{f}_{\boldsymbol{p}_{S},\boldsymbol{E}_{S}}&=\{(r_{i},a_{i})\}_{i=1}^{m}~,\\
\boldsymbol{f}_{\boldsymbol{p}_{W},\boldsymbol{E}_{W}}&=\{(1,c)\}~.
\end{align}
And for final setup:
\begin{align}
\boldsymbol{f}_{\boldsymbol{p}'_{S},\boldsymbol{E}_{S}}&=\{(1,1/Z_{S})\}~,\\
\boldsymbol{f}_{\boldsymbol{p}'_{W},\boldsymbol{E}_{W}}&=\{(r_{i},cZ_{S}a_{i})\}_{i=1}^{m}~.
\end{align}
From Eq. \eqref{eqn:compositethermalcurve} we have:
\begin{align}
\boldsymbol{f}_{\boldsymbol{p}_{SW},\boldsymbol{E}_{SW}}
 & =\boldsymbol{f}_{\boldsymbol{p}'_{SW},\boldsymbol{E}_{SW}} \\
 & =  \{(r_{i},ca_{i})\}_{i=1}^{m}~,
\end{align}
which means the two final total curves indeed coincide.
The energy change in this work reservoir is:
\begin{align}
W & = \sum_{i=1}^{m}r_{i}(\epsilon'_{i}-\epsilon_{i}) \\
   & = \kt D_{1}(\boldsymbol{p}_{S}||\boldsymbol{\tau}_{S})
  ~.
\end{align}
It is not hard to prove that this is the unique $2m$-dimensional efficient work reservoir for $\rho_{S}$ work extraction.

Since entropy production vanishes, we can use the same work reservoir to form the state $\boldsymbol{\tau}_{S} \to \boldsymbol{p}_{S}$. Hence, the minimal dimension of the efficient work reservoir for both work extraction and state formation is equal to $2\cdot\# f_ {\boldsymbol{p}_{S},\boldsymbol{E}_{S}}$. Appendix \ref{appendix:monoid} goes on to construct thermomajorization curves of all possible efficient work reservoirs for state formation and work extraction from the minimal efficient work reservoirs.

\begin{table}[!t]
    \scalebox{0.276}{
  \includegraphics[]{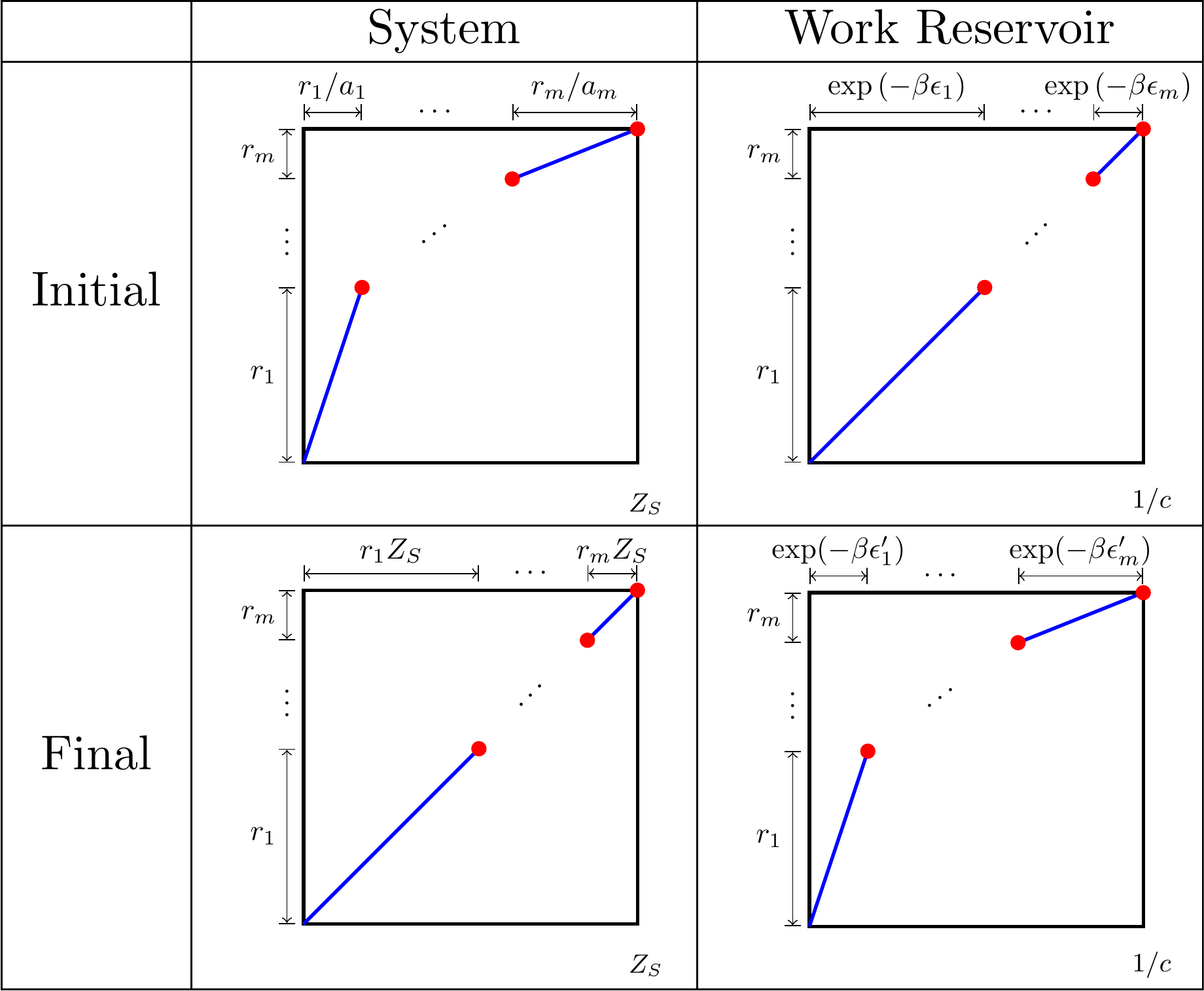}
  }
\caption{Initial and final thermomajorization curves for the efficient work extraction reservoir, ignoring the zero-slope parts.}
\label{table:minimumworkreservoir}
\end{table}

\subsection{Efficient reservoirs exist}

We will not develop all possible efficient work reservoirs for general state transitions here, though. Nonetheless, the next theorem establishes the existence of efficient work reservoirs for them---our second main result. 
\begin{theorem}\label{thm:existenceeffworkreservoir}
    For two general $n-$dimension states $\boldsymbol{p}_{S}$ and $\boldsymbol{p}'_{S}$ over energy levels $\boldsymbol{E}_{S}$, there exists a work reservoir $(\boldsymbol{p}_{W},\boldsymbol{p}'_{W},\boldsymbol{E}_{W})$ such that the thermomajorization curves of $(\boldsymbol{p}_{SW},\boldsymbol{E}_{SW}) $and $(\boldsymbol{p}'_{SW},\boldsymbol{E}_{SW})$ coincide.
\end{theorem}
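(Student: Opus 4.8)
The plan is to reduce the general transition $\boldsymbol{p}_S \to \boldsymbol{p}'_S$ to the construction already carried out for work extraction, by factoring it through the Gibbs state. By Theorem \ref{thm:coincide}, a work reservoir is efficient precisely when the initial and final thermomajorization curves of the joint system-reservoir configuration coincide. So the goal is to exhibit energies $\boldsymbol{E}_W$ and a distribution $\boldsymbol{r}$ such that $f_{\boldsymbol{p}_{SW},\boldsymbol{E}_{SW}}$ and $f_{\boldsymbol{p}'_{SW},\boldsymbol{E}_{SW}}$ are identical, where $\boldsymbol{p}_W=(\boldsymbol{r},\boldsymbol{0})$ and $\boldsymbol{p}'_W=(\boldsymbol{0},\boldsymbol{r})$.

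The key idea is composition. Let $m=\# f_{\boldsymbol{p}_S,\boldsymbol{E}_S}$ and $m'=\# f_{\boldsymbol{p}'_S,\boldsymbol{E}_S}$. The previous subsection gives an explicit $2m$-level reservoir whose initial curve coincides with the final curve for the extraction $\boldsymbol{p}_S \to \boldsymbol{\tau}_S$; equivalently, by reversibility, it drives $\boldsymbol{\tau}_S \to \boldsymbol{p}_S$ as a state-formation reservoir. Symmetrically, there is a $2m'$-level reservoir realizing $\boldsymbol{\tau}_S \to \boldsymbol{p}'_S$, hence also $\boldsymbol{p}'_S \to \boldsymbol{\tau}_S$. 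The transition $\boldsymbol{p}_S \to \boldsymbol{p}'_S$ factors as $\boldsymbol{p}_S \to \boldsymbol{\tau}_S \to \boldsymbol{p}'_S$, each leg reversible with its own efficient reservoir. Concretely, I would build a product reservoir on $\mathcal{W}=\mathcal{W}_1 \times \mathcal{W}_2$: the first factor supplies the energy $\kt D_1(\boldsymbol{p}_S\|\boldsymbol{\tau}_S)$ to flatten the system to $\boldsymbol{\tau}_S$, and the second factor absorbs $\kt D_1(\boldsymbol{p}'_S\|\boldsymbol{\tau}_S)$ to reshape $\boldsymbol{\tau}_S$ into $\boldsymbol{p}'_S$, so that the net reservoir energy change is exactly $\kt\left[D_1(\boldsymbol{p}_S\|\boldsymbol{\tau}_S)-D_1(\boldsymbol{p}'_S\|\boldsymbol{\tau}_S)\right]$, matching $-\Delta F_S$ as Eq.~\eqref{eqn:entropyproductioninfotheoritic} demands for zero production. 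The initial and final reservoir populations must each have the block form $(\boldsymbol{r},\boldsymbol{0})$ and $(\boldsymbol{0},\boldsymbol{r})$ so that the reservoir entropy change vanishes, which I would arrange by choosing the shared distribution $\boldsymbol{r}$ as the appropriate product of the two single-step distributions.

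To verify efficiency I would use the composition rule \eqref{eqn:compositethermalcurve}: the joint thermomajorization curve of a product is obtained by multiplying the $(y_i,k_i)$ tuples pairwise. Passing through $\boldsymbol{\tau}_S$ is advantageous because $\boldsymbol{\tau}_S$'s thermomajorization curve is a single straight segment (one distinct slope), so the intermediate configuration has a particularly simple curve, and the coincidence condition factorizes cleanly across the two reservoir blocks. I would set the reservoir energies exactly as in Table~\ref{table:minimumworkreservoir} for each leg, with the output energies of the first block aligned to the input energies of the second, and then check that the multiplied tuple sets agree as multisets before and after.

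The main obstacle is the bookkeeping at the junction: the two single-step reservoirs are indexed by the $m$ and $m'$ distinct slopes of the two system curves, and I must confirm that the contraction factors they impose on the (single-slope) $\boldsymbol{\tau}_S$ curve compose consistently, i.e. that gluing a formation reservoir onto an extraction reservoir still yields populations of the required $(\boldsymbol{r},\boldsymbol{0})\to(\boldsymbol{0},\boldsymbol{r})$ form with a common $\boldsymbol{r}$. This amounts to checking that the slope-matching equations of the two legs admit a simultaneous solution; since each leg's equations fix its energies only up to an overall multiplicative constant $c$, I expect enough freedom to align them, and the existence claim follows once that alignment is exhibited.
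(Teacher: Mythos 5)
Your route is sound, and it is genuinely different from the paper's own proof. The paper never factors through the Gibbs state: it builds a single reservoir directly, by merging the initial and final cumulative distributions into $\boldsymbol{R}=\boldsymbol{P}\cup\boldsymbol{P}'$, defining coarse-grainings $\lambda,\lambda'$ so that the reservoir's initial curve mimics $\boldsymbol{p}'_{S}$ and its final curve mimics $\boldsymbol{p}_{S}$, and then writing the energy levels down explicitly in Eqs.~\eqref{eqn:initiallevels} and \eqref{eqn:finallevels}. What your factorization buys is conceptual economy---you reuse the extraction reservoir of Table~\ref{table:minimumworkreservoir} twice, once reversed, which is legitimate because curve coincidence is a symmetric relation and Theorem~\ref{thm:coincide} therefore licenses the formation leg $\boldsymbol{\tau}_{S}\to\boldsymbol{p}'_{S}$ with zero entropy production. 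The price is size: your composite has of order $2mm'$ (up to $2n^{2}$) active levels, versus the paper's $2N$ with $N\leq 2n-1$; in fact your product reservoir, with population $\boldsymbol{r}_{1}\otimes\boldsymbol{r}_{2}$, is essentially the object the paper builds by hand in Appendix~\ref{appendix:diffworkreservoir}. One correction that strengthens your argument: the ``main obstacle'' you flag---aligning the two legs at the junction---does not exist. No alignment of the constants $c_{1},c_{2}$ is needed, because curve composition \eqref{eqn:compositethermalcurve} is a commutative, associative product on thermomajorization curves (the monoid of Appendix~\ref{appendix:monoid}), under which a spectator factor is simply carried along. Write $\boldsymbol{f}[\cdot]$ for curves, let $\boldsymbol{f}_{1}^{\mathrm{in}},\boldsymbol{f}_{1}^{\mathrm{fin}}$ be the curves of reservoir $W_{1}$ (extraction reservoir for $\boldsymbol{p}_{S}$) and $\boldsymbol{f}_{2}^{\mathrm{in}},\boldsymbol{f}_{2}^{\mathrm{fin}}$ those of reservoir $W_{2}$ (the extraction reservoir for $\boldsymbol{p}'_{S}$ run in reverse, so $\boldsymbol{f}[\boldsymbol{\tau}_{S}]\otimes\boldsymbol{f}_{2}^{\mathrm{in}}=\boldsymbol{f}[\boldsymbol{p}'_{S}]\otimes\boldsymbol{f}_{2}^{\mathrm{fin}}$). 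Then the two coincidence facts give
\begin{align*}
\boldsymbol{f}[\boldsymbol{p}_{S}]\otimes\boldsymbol{f}_{1}^{\mathrm{in}}\otimes\boldsymbol{f}_{2}^{\mathrm{in}}
&=\boldsymbol{f}[\boldsymbol{\tau}_{S}]\otimes\boldsymbol{f}_{1}^{\mathrm{fin}}\otimes\boldsymbol{f}_{2}^{\mathrm{in}}\\
&=\boldsymbol{f}_{1}^{\mathrm{fin}}\otimes\bigl(\boldsymbol{f}[\boldsymbol{\tau}_{S}]\otimes\boldsymbol{f}_{2}^{\mathrm{in}}\bigr)\\
&=\boldsymbol{f}[\boldsymbol{p}'_{S}]\otimes\boldsymbol{f}_{1}^{\mathrm{fin}}\otimes\boldsymbol{f}_{2}^{\mathrm{fin}}
~,
\end{align*}
so the initial and final joint curves coincide for arbitrary $c_{1},c_{2}$, and Theorem~\ref{thm:coincide} yields a single zero-dissipation thermal operation---no intermediate state or sequential implementation is even required. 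The remaining bookkeeping is exactly what you identified: the composite population is $\boldsymbol{r}_{1}\otimes\boldsymbol{r}_{2}$ on the initial block of levels and on a disjoint final block, which after relabeling (padding with never-occupied levels if you also want the two-step physical realization) is of the $(\boldsymbol{r},\boldsymbol{0})\to(\boldsymbol{0},\boldsymbol{r})$ form demanded by Definition~\ref{def:mworkreservoir}, so the reservoir entropy change vanishes. With that short computation inserted in place of your ``I expect enough freedom,'' your proof is complete.
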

Appendix \ref{appendix:detailconstruction} gives the details on how to construct the probability distribution and energy levels for efficient work reservoirs.

Here, we discuss several properties and applications of efficient work reservoirs. If $\boldsymbol{E}_{W}=\{\epsilon_{1},\cdots, \epsilon_{N}, \epsilon_{1}',\cdots, \epsilon_{N}'\}$ determines the energy levels for an efficient work reservoir with probability transition $(\boldsymbol{r},\boldsymbol{0})\to (\boldsymbol{0}, \boldsymbol{r})$, then $\boldsymbol{E}'_{W}=\{\epsilon_{1}+c, \cdots, \epsilon_{N}+c, \epsilon_{1}'+c, \cdots, \epsilon_{N}'+c\}$ gives the energy levels of an efficient work reservoir with the same probability distribution, where $c$ is a constant. This shows that efficient work reservoirs have translational symmetry. That is, only gaps between energy levels in efficient work reservoirs matter. 

Since our efficient work reservoirs have more than two levels, the work fluctuates. The entropy production with efficient work reservoirs could be arbitrarily small. The variance of the work, however, could be greater than nonefficient work reservoirs. This can be seen by noting that the work variance is 0 in two-level work reservoirs since the work is deterministic, while the work variance in efficient work reservoirs is greater than 0.

Consider an example. Suppose the system is three dimensional with trivial Hamiltonian $H=0$ and the initial distribution is $(\frac{1}{3}, \frac{2}{3}, 0)$. Using a two-level work reservoir to harness work from this system, the extractable work is $W_{\text{2-level}}=\kt \log{3/2}$ and the work variance is 0. The zero variance is due to the fact that the work is deterministic in a two level work reservoir. If we use an efficient work reservoir to harness work from this system, though, the average work is $W_{\text{efficient}}=\kt(\log 3-H(1/3))$, where $H(\cdot)$ is the binary entropy function. The work variance, however, is nonzero. We have $W_{\text{efficient}}>W_{\text{2-level}}$. This example shows us that for a protocol with nonzero entropy production, the work variance might be less compared to a protocol with zero entropy production.

For transitions under time-dependent Hamiltonians, we introduce a clock system \cite{horodecki2013fundamental}. Suppose the initial and final Hamiltonians are $H_{S}$ and $H_{S}'$, respectively. The total Hamiltonian including the clock system is:
\begin{align}
    H = H_{S} \otimes \ra{0}\la{0} + H_{S}' \otimes \ra{1}\la{1}
    ~.
\end{align}
With the clock system, we require that any transition to be $\rho_{S} \otimes \ra{0}\la{0} \to \rho_{S}' \otimes \ra{1}\la{1}$. In this, the Hamiltonian changes from $H_{S}$ to $H_{S}'$. Appendix \ref{appendix:examplesofreservoir} presents two examples of efficient work reservoirs for nontrivial Hamiltonians and for time-dependent Hamiltonian state transitions. 

One of applications of efficient work reservoirs is to build a quantum engine that approaches Carnot efficiency. Suppose we pick a two-dimension system spanned by $\{\ra{0},\ra{1}\}$ with Hamiltonian $H_{\mathrm{eng}}=\epsilon \ra{1}\la{1}$. The engine functions with a hot bath at temperature $T_{H}$ and a cold bath at temperature $T_{C}$. Initially, the work reservoir is in its Gibbs state at temperature $T_{C}$. First, it is brought to the hot bath, interacts with the hot bath to extract work and ends up in Gibbs state at temperature $T_{H}$. Then, it is brought to the cold bath, interacts with it to extract work and ends up in Gibbs state at temperature $T_{C}$ finishing the cycle. If we use efficient work reservoirs to extract work, then the entropy production is arbitrarily close to 0. Then the engine's efficiency approaches Carnot efficiency $1- T_{C} / T_{H}$. Appendix \ref{appendix:engine} gives the details for constructing the work reservoir's probability distribution and energy levels.

\begin{figure}
    \centering
    \includegraphics[width=\columnwidth]{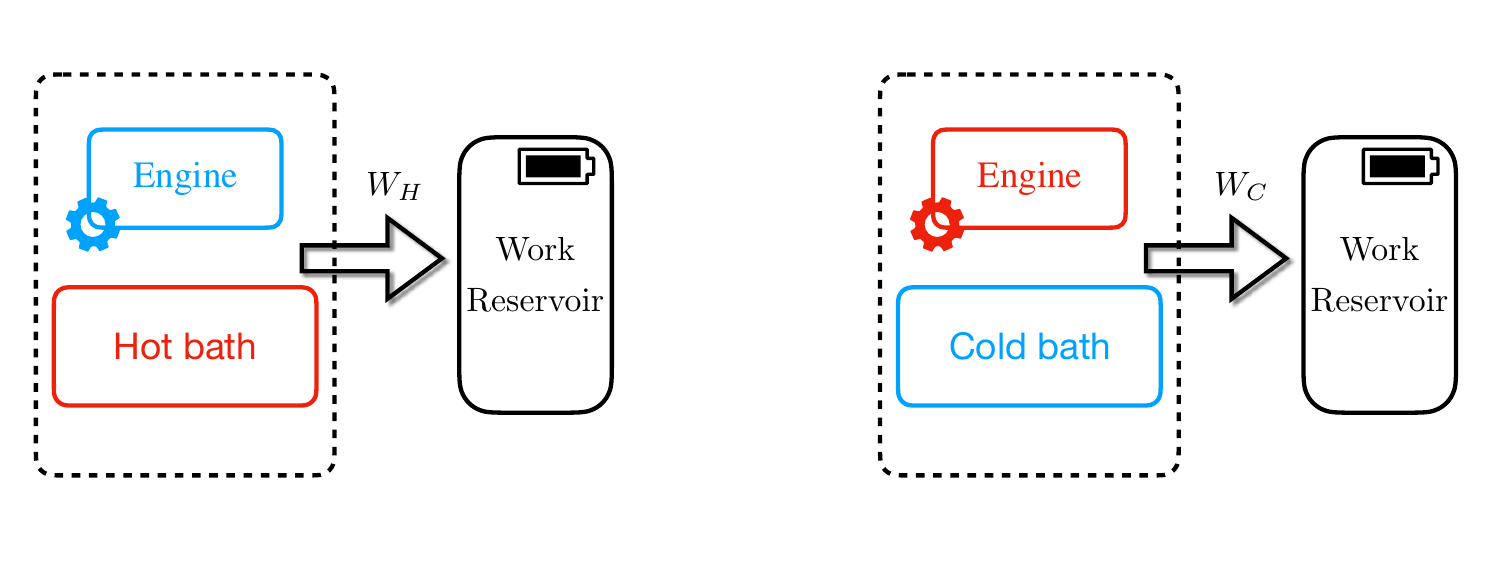}
    \caption{Work are stored into the reservoir during the engine and work reservoir interacting with the hot bath and the cold bath. First, the engine and reservoir interacts with the hot bath. The engine begins with the cold Gibbs state and ends with the hot Gibbs state. The amount of work $W_H = k_{B}T_{H} D_{1}(\tau_{C}||\tau_{H})$ is stored in the work reservoir. Then the engine and reservoir are brought to the cold bath. Similarly, the amount of work $W_C = k_{B}T_{C} D_{1}(\tau_{H}||\tau_{C})$ is stored in the work reservoir during the interaction with the cold bath.}
    \label{fig:engine}
\end{figure}

\section{Catalyzed work reservoirs}
\label{section:efficientworkreservoirwithcatalysts}

The development to this point was limited to noncatalytic scenarios. The following explores efficient work extraction with the aid of catalysts. Here, the intention is not to surpass the bound set by free energy differences. Rather, we ask whether we can extract work without dissipation by using a smaller work reservoir with catalysts. 

The main result in catalytic thermal operations is that the transition from state $\rho_{S}$ to $\rho_{S}'$ is possible through a \emph{catalytic thermal operation}---denoted $\rho_{S} \overset{CTO}{\longrightarrow} \rho_{S}'$---if and only if $D_{\alpha}(\rho_{S} \| \tau_{S}) \geq D_{\alpha}(\rho_{S}' \| \tau_{S})$, for all $\alpha \in \mathbb{R}$ \cite{brandao2015second}. The next theorem shows that catalysts do not help reach zero entropy production.

\begin{theorem}\label{thm: catalystwith0entropyproduction}
Consider a system with Hamiltonian $H$ and a catalyst state $c$ with Hamiltonian $H_{c}$. If state $\rho$ can be converted into a state that is arbitrarily close to state $\sigma$ through a thermal operation with the catalyst $c$ under arbitrarily small entropy production, then the transition can be achieved through a noncatalytic thermal operation. 
\end{theorem}
Appendix \ref{appendix:proof} gives the proof.
This shows that close to the zero dissipation regime, thermal operations and catalytic thermal operations are equivalent. 
Theorem \ref{thm: catalystwith0entropyproduction} provides yet another criterion for checking if two thermomajorization curves coincide.

\begin{theorem}\label{thm:coincide2}
Given a system with Hamiltonian $H$ and states $\rho$ and $\sigma$, the following are equivalent: 
\begin{enumerate}[label=(\alph*)]
    \item \label{Coincide21} Thermomajorization curves of $\rho$ and $\sigma$ coincide.
    \item \label{Coincide22} $D_{\alpha}(\rho \| \tau) = D_{\alpha}(\sigma \| \tau),\  \text{~for~all~} \alpha \in \mathbb{R}$.
\end{enumerate}
\end{theorem}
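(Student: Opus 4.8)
The plan is to obtain this equivalence not by computing the R\'enyi divergences directly from the curve, but as a corollary of the three preceding results together with the $D_{\alpha}$ characterization of catalytic thermal operations \cite{brandao2015second}. The bridge in both directions is the elementary observation that two thermomajorization curves coincide exactly when each state thermomajorizes the other, i.e. when there exist (noncatalytic) thermal operations $\rho \overset{TO}{\longrightarrow} \sigma$ and $\sigma \overset{TO}{\longrightarrow} \rho$.

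For \ref{Coincide21} $\to$ \ref{Coincide22}: if the curves coincide, then $\rho$ thermomajorizes $\sigma$ and $\sigma$ thermomajorizes $\rho$, so both $\rho \overset{TO}{\longrightarrow} \sigma$ and $\sigma \overset{TO}{\longrightarrow} \rho$ hold. Every thermal operation is a catalytic thermal operation with trivial catalyst, so the forward direction of the characterization $\rho \overset{CTO}{\longrightarrow}\sigma \Leftrightarrow D_{\alpha}(\rho\|\tau)\geq D_{\alpha}(\sigma\|\tau)$ for all $\alpha\in\mathbb{R}$ \cite{brandao2015second}, applied to each transition, yields $D_{\alpha}(\rho\|\tau)\geq D_{\alpha}(\sigma\|\tau)$ and $D_{\alpha}(\sigma\|\tau)\geq D_{\alpha}(\rho\|\tau)$ simultaneously, hence equality for every $\alpha$.

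For \ref{Coincide22} $\to$ \ref{Coincide21}: I would extract two facts from the hypothesis. First, equality of the $\alpha=1$ divergences together with Eq. \eqref{eqn:entropyproductioninfotheoritic} gives $\Sigma_{\rho\to\sigma}=0$. Second, the inequality $D_{\alpha}(\rho\|\tau)\geq D_{\alpha}(\sigma\|\tau)$ for all $\alpha$ gives $\rho \overset{CTO}{\longrightarrow}\sigma$ by the same characterization. Theorem \ref{thm: catalystwith0entropyproduction} then upgrades this zero-dissipation catalytic transition to a genuine noncatalytic one, $\rho \overset{TO}{\longrightarrow}\sigma$. Finally, having realized $\rho\to\sigma$ by a thermal operation whose entropy production vanishes, the (b)$\to$(a) direction of Theorem \ref{thm:coincide} concludes that the curves coincide. (Symmetrically, the two-sided equality yields $\rho\overset{CTO}{\longrightarrow}\sigma$ and $\sigma\overset{CTO}{\longrightarrow}\rho$ each with zero entropy production, which Theorem \ref{thm: catalystwith0entropyproduction} turns into mutual thermomajorization.)

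The direction carrying the genuine content is \ref{Coincide22} $\to$ \ref{Coincide21}, and it is instructive to see why the indirect route is worth taking. Writing $D_{\alpha}(\rho\|\tau)=\frac{1}{\alpha-1}\log(Z_{S}^{\alpha-1}\sum_i k_i^{\alpha} w_i)$, with $k_i$ and $w_i$ the slope and width of the $i$-th segment, equality of all $D_{\alpha}$ amounts to equality of the quantities $\sum_i k_i^{\alpha} w_i$ for every $\alpha$, i.e. of all the $\alpha$-moments of the slope-versus-width data that the curve encodes; recovering the coarse-grained curve from these moments is a moment-problem-type argument and is the awkward step. Routing instead through the catalytic characterization and Theorem \ref{thm: catalystwith0entropyproduction} sidesteps it entirely, since the $D_{\alpha}$ inequalities deliver the catalytic transition and the $\alpha=1$ equality supplies the zero-dissipation hypothesis that strips the catalyst. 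The only real care needed is bookkeeping the direction of each implication in the characterization and confirming the reduction to a noncatalytic thermal operation before invoking Theorem \ref{thm:coincide}; no separate treatment of the $\alpha\in\{0,1,\infty\}$ or $\alpha<0$ endpoints is required because the characterization already ranges over all real $\alpha$.
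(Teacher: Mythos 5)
Your proof is correct, and in the direction that carries the real content, \ref{Coincide22} $\to$ \ref{Coincide21}, it is the same as the paper's: extract a catalytic thermal operation from the $D_{\alpha}$-equalities via the characterization in \cite{brandao2015second}, note from the $\alpha=1$ case and Eq.~\eqref{eqn:entropyproductioninfotheoritic} that its entropy production vanishes, and invoke Theorem~\ref{thm: catalystwith0entropyproduction}. Your one addition there---passing explicitly through the (b)$\to$(a) direction of Theorem~\ref{thm:coincide} after stripping the catalyst---is actually cleaner bookkeeping than the paper's, since the \emph{statement} of Theorem~\ref{thm: catalystwith0entropyproduction} only yields a noncatalytic thermal operation (curve coincidence is established inside its proof, not in its conclusion), so the extra step you take is the one that makes the citation airtight. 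Where you genuinely deviate is the easy direction \ref{Coincide21} $\to$ \ref{Coincide22}: the paper applies the R\'enyi data-processing inequality directly to the two Gibbs-preserving stochastic maps furnished by mutual thermomajorization, with no appeal to catalysis, whereas you promote the two thermal operations to catalytic ones (trivial catalyst) and quote the necessity half of the characterization in \cite{brandao2015second}. Both are valid; the paper's route is more elementary and self-contained, using only monotonicity of $D_{\alpha}$ under stochastic maps, while yours is more uniform---the whole theorem becomes a corollary of the CTO characterization plus Theorems~\ref{thm:coincide} and \ref{thm: catalystwith0entropyproduction}---at the cost of leaning on a much stronger cited result whose necessity direction is itself proved by exactly the data-processing argument the paper writes out. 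Your closing remark is also well taken: the direct route from moment-type equality of $\sum_i k_i^{\alpha} w_i$ for all $\alpha$ back to the coarse-grained curve is the awkward step, and both you and the paper avoid it the same way.
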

Again, we place the proof in Appendix \ref{appendix:proof}. It seems the catalysts are useless if we require the entropy production to be arbitrarily small. However, we find that if catalysts are allowed to correlate states in a trivial Hamiltonian, every state transition's entropy production can be reduced to 0; see Appendix \ref{Appendix:correlatedcatalyst}.

\section{Discussion}

In stochastic thermodynamics, it is well-known that the maximal extractable work from a state transition $\rho_{S} \to \rho_{S}'$ is the (negative) nonequilibrium free energy difference. The maximum is approached when the dissipation is arbitrarily small. However, as we showed, zero dissipation with two-level work reservoirs cannot always be approached in single-shot thermodynamics. This is due to the fact that, with two-level work reservoirs, we can only contract a thermomajorization curve by a factor. Two-level work reservoirs are not powerful enough to approach zero dissipation for every state transition.

To remove this restriction, we generalized two-level work reservoirs to multi-level work reservoirs. The extractable work is then defined as the difference in the expectation values of work reservoir energies: $W=\sum_{i}r_{i}(\epsilon_{i}'-\epsilon_{i})$. Naturally, a two-level work reservoir can be treated as a special case where $W=\epsilon'-\epsilon$. Our work value definition is similar to that in stochastic thermodynamics: $dw = \sum_{i}p_{i}d\epsilon_{i}$, where the work is defined as the system energy change while keeping the system probability distribution unchanged \cite{aaberg2013truly}.

Here, though, the probability distribution components of the work reservoirs do not change overall. For each nonzero component, there is a corresponding energy level change in the work reservoir. Our results show that we can achieve reversibility in single-shot thermodynamics with multi-level work reservoirs. The price paid, however, is that the size of the thermal baths must be infinite. The dissipation can be written as:
\begin{align}
\Sigma= I(\rho_{S}';\rho_{B}') + D_{1}(\rho_{B}'||\tau_{B})
  ~,
\end{align}
where $I(\cdot;\cdot)$ is the mutual information \cite{reeb2014improved,richens2018finite}. Since the heat $Q$ transferred to the bath is nonzero, if we only have thermal baths of finite size, the dissipation is strictly positive. Appendix \ref{appenix:finiterealization} gives an example where we construct the joint unitary operator explicitly. We show that to approach zero dissipation, the bath size must be infinite.

References \cite{gemmer2015single,gallego2016thermodynamic,alhambra2018work} develop the general framework of work extraction in single-shot thermodynamics. 
Rather than considering strict energy conservation, work extraction can be monitored via average energy conservation \cite{skrzypczyk2014work}. There, work extraction uses a series of transformations, arriving at the same bound when the number of transformations diverges. Reference \cite{alhambra2016fluctuating} considers a weighted Hamiltonian $H_{W}=\int dx x \ra{x}\la{x}$ as a work reservoir. With translational invariance, it derives several compact fluctuation theorems. This allows changes in work reservoir probability distribution, but assumes the work reservoir energy levels are unbounded. References \cite{halpern2016beyond,halpern2018beyond} consider the work extraction of systems that exchange both energy and particles with the environment with multi-level batteries.

In contrast, our development here keeps the work reservoir probability distribution unchanged. This follows from the entropyless assumption of work reservoirs. Reference \cite{lipka2021second} considers a work reservoir with lower-bounded energy levels. Reference \cite{aaberg2018fully} systematically explores quantum fluctuation theorems. Recently, in single-shot thermodynamics, there are other setups that extract work  equal to the (negative) free energy difference \cite{lostaglio2015stochastic,muller2018correlating,sapienza2019correlations}. In this, correlations build up between catalysts and so stochastic independence of catalysts allows extracting more work from given states. 
    
Generalizing to multi-level work reservoirs offers several new directions in nanoscale thermodynamics. Since work is no longer deterministic, it is natural to ask how to compute higher moments $\avg{W^{n}}$ $(n>1)$ and to construct a fluctuation theorem for the work probability distribution. With two-level work reservoirs, the characteristic functions of work extraction and state formation are the R\'enyi $\alpha=0$ and $\alpha = \infty$ divergences, respectively. What are the characteristic functions of work extraction and state formation with multi-level work reservoirs? Our development focused on single-copy state transitions. The structure of the efficient work reservoirs for more complicated state transitions---for example, mapping an input information tape to output tape \cite{barnett2015computational}---must wait for the future.
    
Our development focused only on the net input-output mapping, without considering details of the stochastic map in between. The stochastic map connecting an input to an output here is not unique. If we only consider the work expectation value $\avg{W}$, the change in expectation value of energy in work reservoirs coincides with the expectation value of work in the \emph{two point measurement} (TPM) scheme commonly used in stochastic quantum thermodynamics \cite{esposito2009nonequilibrium}. For higher moments $\avg{W^{n}}$ $(n>1)$ in TPM, however, the values depend on the stochastic maps. Moreover, one cannot determine higher moments uniquely with only initial and final work reservoir states. We can also study the minimal cost of a stochastic map not only a specific state transition. References \cite{faist2015minimal,faist2018fundamental} explored the minimal cost of quantum channels with two-level work reservoirs. We leave the minimal work cost with multi-level work reservoirs also to the future. 

Along these lines, what if we allow coherence in both the system and the work reservoir? For example, what if $\rho_{W}=\sum_{ij}\rho_{ij} \ra{W_{i}}\la{W_{j}}$ and $\rho_{W}=\sum_{ij}\rho_{ij} \ra{W'_{i}}\la{W'_{j}}$? To address state transitions with coherence, $\alpha-$R\'enyi divergences are insufficient \cite{lostaglio2015quantum,lostaglio2015description}. Can we achieve the bounds set by free energy difference when the states are not block-diagonal in energy eigenstates with those work reservoirs? Again, we leave this open for the future efforts.

\section{Conclusion}

We generalized two-level work reservoirs commonly used in single-shot thermodynamics to multi-level work reservoirs and systematically analyzed arbitrarily-small-dissipation state transitions with the latter. We derived equivalent conditions for arbitrarily-small-dissipation transitions in single-shot thermodynamics: thermomajorization curve coincidence and $\alpha-$R\'enyi divergence equality. We showed that for any state transition, we can always construct a work reservoir to approach zero dissipation.

We also considered cases where the initial system Hamiltonian differs from the final Hamiltonian. The efficient work reservoir, though, for a specific state transition is not unique. For work extraction and state formation in this setting, we constructed the efficient work reservoir with minimal dimension. We showed that all thermomajorization curves at inverse temperature $\beta$ form a monoid and characterized all possible efficient reservoirs for work extraction and state formation. These allowed us to analyze nanoscale engines that employ efficient work reservoirs, demonstrating that they approach Carnot efficiency.


\section{Acknowledgements}

The authors thank Han Zhang, Paul Riechers, Ariadna Venegas-Li, David Gier, Hyun-Soo Kim, and Komal Sah for illuminating discussions, as well as the Telluride Science Research Center for its hospitality during visits and the participants of the Information Engines workshop there for their valuable feedback. This material is based on work supported by, or in part by, the U.S. Army Research Laboratory and U.S. Army Research Office under Grant No. W911NF-21-1-0048. A.B.B. acknowledges support from the Templeton World Charity Foundation Power of Information fellowships TWCF0337 and TWCF0560, from the Irish Research Council under grant number IRCLA/2022/3922, and from the Foundational Questions Institute and Fetzer Franklin Fund, a donor advised fund of the Silicon Valley Community Foundation, grant number FQXi-RFP-IPW-1910.

\appendix

\section{Free energy work bound}
\label{appendix:Free energy work bound}

Reference \cite{esposito2010entropy} establishes the Second Law of thermodynamics for the entropy production of a system $S$ in contact with a heat bath $B$ at temperature $T$:
\begin{align}
    \Sigma \equiv  Q/T + \Delta S(\rho) \geq 0.
\end{align}
Here, $Q$ is the average heat that was dissipated in the bath, $\Sigma$ is the total entropy production, and $S(\rho) \equiv - k_B \text{Tr} \left[ \rho \ln \rho \right]$ is the von Neumann entropy of the system $S$.  The resulting  bound on heat is:
\begin{align}
     Q/T \geq -\Delta S(\rho),
\end{align}
which is a quantum version of Landauer's principle \cite{alicki2004thermodynamics, reeb2014improved, riechers2021initial}.  We can bound the work by noting the First Law of thermodynamics: the change in average energy of the system is equal the minus the heat flow and work produced from the system:
\begin{align}
    \Delta \langle E \rangle = - Q-W,
\end{align}
where $\langle E \rangle = \text{Tr} \left[ \rho H \right]$. Applying the entropy bound on heat to the work production, we find the work production in transforming $\rho \rightarrow \rho'$ has the upper bound:
\begin{align}
    W & \leq T \Delta S(\rho) - \Delta \langle E \rangle
    \\ & = T(S(\rho')-S(\rho)) -(\text{Tr}\left[\rho' H\right]- \text{Tr}\left[\rho H\right]).
\end{align}
With the free energy defined:
\begin{align}
    F(\rho) \equiv \text{Tr}( \rho H) - T S(\rho),
\end{align}
we have an upper bound on work via the change in free energy:
\begin{align}
   W \leq F (\rho)-F(\rho').
\end{align}

Furthermore, we have a simplification when the Hamiltonian $H$ is the same for the initial and final state of the system. The Gibbs state $\tau$ of Hamiltonian $H$ obeys the relationship:
\begin{align}
    \tau = \frac{e^{-H/k_BT}}{\text{Tr} \left[e^{-H/k_BT} \right]},
\end{align}
which gives an inverse expression:
\begin{align}
    H= - k_B T \ln \tau -k_B T \ln \text{Tr} \left[e^{-H/k_BT} \right].
\end{align}
Plugging this into the average energy in the bound on work production, we obtain a change in relative entropies:
\begin{align}
    W \leq \left[ D_{1}(\rho|| \tau)-D_1(\rho'|| \tau) \right].
\end{align}
where $D_1(\rho||\sigma) \equiv \text{Tr} \left[ \rho \ln \rho- \rho \ln \sigma \right]$ is the quantum relative entropy.

\section{Thermal operations}
\label{appendix:thermaloperations}

Our results are based on the resource theory approach to quantum thermodynamics, several results from which we briefly note here. See Refs. \cite{lostaglio2019introductory, ng2019resource,gour2015resource} for more comprehensive reviews.

The central idea is to define a set of operations---the \emph{free operations}---and systematically analyze all possible state transitions under free operations. Suppose our state is $\rho_{S}$ with Hamiltonian $H_{S}$. The set of allowed transitions then contains all joint energy-preserving unitary $U$ operations between the system and a thermal bath with the Hamiltonian $H_{B}$ at inverse temperature $\beta$:
\begin{align}
    [U,H_{S}+H_{B}] & = 0
    ~,
\end{align}
followed by the partial trace over the thermal bath:
\begin{align}
    \rho'_{S} & = \mathcal{E}(\rho_{S}) \\
    & =\mathrm{Tr}_{B}\big(U (\rho_{S} \otimes \tau_{B}) U^{\dagger} \big)
    ~,
\end{align}
where $\tau_{B}=e^{-\beta H_{B}}/Z_{B}$ is the Gibbs state of the thermal bath. The maps $\mathcal{E}$ are called \emph{thermal operations}.

Suppose the eigenvalues of $\rho_{S}$ and $\rho_{S}'$ are $\{{p}_{i}\}_{i=1}^{n}$ and $\{p'_{i}\}_{i=1}^{n}$ and the associated energy levels are $\{e_{i}\}_{i=1}^{n}$. Such a transition is equivalent to there being a stochastic matrix $G$ such that $G\boldsymbol{p}=\boldsymbol{p'}$ and $G\boldsymbol{\tau}=\boldsymbol{\tau}$ \cite{horodecki2013fundamental}.

We can also use a geometric method to determine whether such a transition exists. A key concept is the \emph{thermomajorization curve} \cite{horodecki2013fundamental}. We first rank $\{{p}_{i}\}_{i=1}^{n}$ in descending order of ${p}_{i}e^{\beta e_{i}}$. This is called \emph{$\beta-$order}. The thermomajorization curve of a state $\rho_{S}$ is formed by connecting points:
\begin{align}
    \Big\{ \sum_{i=1}^{k} e^{-\beta {e}_{i}^{\downarrow}}, \sum_{i=1}^{k} {p}_{i}^{\downarrow} \Big\}_{k=1}^{n}
\end{align}
piecewise linearly where $\downarrow$ means that ${p}_{i}$ and ${e}_{i}$ have been $\beta-$ordered. If the thermomajorization curve of a state $\rho_{S}$ lies above or on the thermomajorization curve of another state $\rho_{S}'$, we say $\rho_{S}$ \emph{thermomajorizes} $\rho_{S}'$. The central result is that $\rho_{S}$ can be converted to  $\rho_{S}'$ through a thermal operation if and only if $\rho_{S}$ thermomajorizes $\rho_{S}'$. 

Next, we briefly review work extraction and the work of state formation. Consider a work reservoir that is a two-level system with Hamiltonian $H_{W}=W_{0}\ra{W_0}\la{W_0}+W_{1}\ra{W_1}\la{W_1}$. The task is to determine if the maximal work can be extracted from a state $\rho_{S}$. This is the maximal work change $W_{1}-W_{0}$ such that $\rho_{S} \otimes \ra{W_0}\la{W_0} \to \tau_{S} \otimes \ra{W_1}\la{W_1}$ is allowed by thermal operations. This is elegantly determined from the thermomajorization curve.

\begin{figure}[t]
    \centering
    \begin{tikzpicture}
    \draw[black, very thick](0,0) rectangle(6.5,4) node [label={[shift={(0.1,-4.7)}]:{\small $ Z_{S}Z_{W}$}}] {};
    \draw[blue, very thick] (0,0)--(1,3);
    \draw[blue, very thick] (1,3)--(3,3.8);
    \draw[blue, very thick] (3,3.8)--(4,4);
    \draw[blue, very thick] (4,4)--(6.5,4);
    \draw[black, thick, dashed] (4,4)--(4,0) node[anchor=north]{\small$a e^{-\beta W_{0}}$};
    \draw[red, thick] (0,0)--(5.3,4);
    \draw[red, thick] (5.3,4)--(6.5,4);
    \draw[black, thick, dashed] (5.3,4)--(5.3,0) node[anchor=north]{\small$e^{-\beta W_{1}} Z_{S}$};
    \draw (0,4) node[anchor=east] {1};
    \end{tikzpicture}
\caption{Deterministic work extraction: The blue curve is the thermomajorization curve of $\rho \otimes \ra{0}\la{0}$. $a\leq Z_{S}$ is the $x$-coordinate of the point where the thermomajorization curve of $\rho_{S}$ reaches $1$. The red curve is the thermomajorization curve of $\tau_{S}\otimes \ra{1}\la{1}$. $Z_{S}$ and $Z_{W}$ are partition functions of the system and the work reservoir, respectively.}
\label{fig:workextraction}
\end{figure}

For the initial curve to thermomajorize the final curve, we must have $a e^{-\beta W_{0}} \leq Z_{S} e^{-\beta W_{1}}$. See Fig. \ref{fig:workextraction}. Here, $a$ is related to R\'enyi divergence via: $D_{0}(\rho_{S} \| \tau_{S})=-\log(a/Z_{S})$. We have the bound $W_{1}-W_{0}\leq \kt D_{0}(\rho_{S} \| \tau_{S})$. The equal sign holds when two curves reach the height $1$ at the same point.

Similarly, we can consider the reverse question: What is the minimal work needed to form state $\rho_{S}$? Or, in other words, what is the minimal $W_{1}-W_{0}$ such that $(\tau_{S} \otimes \ra{W_1}\la{W_1},H_{S}+H_{W}) \to (\rho_{S} \otimes \ra{W_0}\la{W_0},H_{S}+H_{W})$ is allowed by thermal operations?

\begin{figure}[t]
    \centering
    \begin{tikzpicture}
    \draw[black, very thick](0,0) rectangle(6.5,4) node [label={[shift={(0.1,-4.7)}]:{\small $ Z_{S}Z_{W}$}}] {};
    \draw[blue, very thick] (0,0)--(2,3);
    \draw[blue, very thick, dashed] (2,3)--(2.66,4);
    \draw[blue, very thick] (2,3)--(4,3.8);
    \draw[blue, very thick] (4,3.8)--(5,4);
    \draw[blue, very thick] (5,4)--(6.5,4);
    \draw[red, thick] (0,0)--(1,4);
    \draw[red, thick] (1,4)--(6.5,4);
    \draw[black, thick, dashed] (1,4)--(1,0) node[anchor=north]{\small$Z_{S} e^{-\beta W_{1}}$};
    \draw (0,4) node[anchor=east] {1};
    \end{tikzpicture}
\caption{Deterministic work of state formation: The blue curve is the thermomajorization curve of $\rho_{S} \otimes \ra{W_0}\la{W_0}$. The red curve is the thermomajorization curve of $\tau_{S}\otimes \ra{W_1}\la{W_1}$.}
\label{fig:workformation}
\end{figure}

For the initial curve to thermomajorize the final curve, the slope of the on-ramp part of the initial curve must not be less than the largest slope in the final curve:
\begin{align}
    \frac{1}{Z_{S}} e^{\beta W_{1}}\geq e^{\beta W_{0}} \max_{i} \frac{p_{i}}{e^{-\beta \epsilon_{i}}}
    ~.
\end{align}
And:
\begin{align}
    \max_{i} \left\{ \frac{p_{i}}{e^{-\beta \epsilon_{i}}} \right\} = D_{\infty}(\rho_{S}\|\tau_{S})
    ~.
\end{align}
Giving:
\begin{align}
    W_{1}-W_{0}\geq \kt D_{\infty}(\rho_{S}\|\tau_{S})
    ~.
\end{align}

If there exists an auxiliary system---a \emph{catalyst}---with Hamiltonian $H_{C}$ and state $\rho_{C}$ such that the transition $(\rho_{S}\otimes \rho_{C}, H_{S}+H_{C}) \to (\rho_{S}'\otimes \rho_{C}, H_{S}+H_{C})$ is possible, we say the transition $(\rho_{S}, H_{S}) \to (\rho_{S}', H_{S})$ can be achieved by a \emph{catalytic thermal operation}.

The criterion of the catalytic thermomajorization is given in terms of R\'enyi $\alpha$-divergences. There exists a transition $(\rho_{S}, H_{S}) \overset{CTO}{\longrightarrow} (\rho_{S}', H_{S})$ if and only if \cite{brandao2015second}:
\begin{align}
    D_{\alpha}(\rho_{S}||\tau_{S}) \geq D_{\alpha}(\rho_{S}'||\tau_{S})
    ~,
\end{align}
for all $\alpha \in \mathbb{R}$. If we are allowed to invest an infinitesimal amount of work, only $\alpha\geq 0$ is needed.

We can also study work extraction and state formation in two-level work reservoirs with the help of catalysts. For work extraction:
\begin{align}
(\rho_{S} & \otimes |W_0\rangle \langle W_0|, H_{S}+H_{W}) \nonumber \\
    & \quad \to (\tau_{S} \otimes |W_1\rangle \langle W_1|, H_{S}+H_{W})
    , 
\end{align}
we must have:
\begin{align}
    D_{\alpha}(\rho_{S} & ||\tau_{S}) + D_{\alpha}(\ra{W_0}\la{W_0}||\tau_{W}) \nonumber \\
    & ~  \geq D_{\alpha}(\tau_{S}||\tau_{S})  + D_{\alpha}(\ra{W_1}\la{W_1}||\tau_{W})
    ~.
\end{align}
Giving:
\begin{align}
    W_{1}-W_{0}\leq \kt D_{\alpha}(\rho_{S}||\tau_{S})
    ~,
\end{align}
for all $\alpha\geq 0$.
So, we have:
\begin{align}
    W_{1}-W_{0} & \leq \inf_{\alpha\geq 0}\kt D_{\alpha}(\rho_{S}||\tau_{S}) \nonumber \\
    & = \kt D_{0}(\rho_{S}||\tau_{S})
    ~.
\end{align}

For state formation:
\begin{align}
(\tau_{S} \otimes |W_1\rangle & \langle W_1|, H_{S}+H_{W}) \nonumber \\
    & \to (\rho_{S} \otimes |W_0\rangle \langle W_0|, H_{S}+H_{W})
    ~,
\end{align}
we must have:
\begin{align}
D_{\alpha}(\tau_{S}||\tau_{S}) & + D_{\alpha}(\ra{W_1}\la{W_1}||\tau_{W}) \nonumber \\
    & \geq D_{\alpha}(\rho_{S}||\tau_{S}) + D_{\alpha}(\ra{W_0}\la{W_0}||\tau_{W})
  ~.
\end{align}
Giving:
\begin{align}
    W_{1}-W_{0}\geq \kt D_{\alpha}(\rho_{S}||\tau_{S})
    ~,
\end{align}
for all $\alpha\geq 0$. So, we have:
\begin{align}
W_{1}-W_{0} & \geq \sup_{\alpha\geq0}\kt D_{\alpha}(\rho_{S}||\tau_{S}) \nonumber \\
   & =\kt D_{\infty}(\rho_{S}||\tau_{S})
    ~.
\end{align}

\section{Proofs}\label{appendix:proof}

\subsection{Proof of Theorem \ref{thm:coincide}}

We first list the precise statement on the connection between the thermomajorization curves and existence of the thermal operations and Gibbs preserving stochastic matrices and then list a theorem regarding to thermomajorization curve coincide. After that, we prove Theorem \ref{thm:coincide}.

The distance we use is norm-$1$ distance:
\begin{align}
    \|\rho -\sigma \|_{1}=\mathrm {Tr} \left({\sqrt {(\rho -\sigma )^{\dagger }(\rho -\sigma )}}\right)~.
\end{align}
Since we only consider diagonal states, the norm-1 distance is simply:
\begin{align}
    \|\rho -\sigma \|_{1}=\sum_{i}|(\boldsymbol{p}_{\rho})_{i} - (\boldsymbol{p}_{\sigma})_{i}|~.
\end{align}

\begin{theorem}[Thermal Nielsen’s theorem]\label{thm:epsilonthermoandcurves}
    Consider two block diagonal states $\rho$ and $\sigma$ with Hamiltonian $H$ and their corresponding population vectors are $\boldsymbol{p}_{\rho}$ and $\boldsymbol{p}_{\sigma}$. 
    \begin{enumerate}
    \item For any $\epsilon>0$, there exists a thermal operation $\mathcal{E}$ such that $\mathcal{E}(\rho)$ is arbitrarily close to $\sigma$, i.e., $||\mathcal{E}(\rho)-\sigma||_{1}<\epsilon$ if and only if the thermomajorization curve of $\rho$ lies above or on the thermomajorization curve of $\sigma$.
    \item There exists a Gibbs preserving stochastic map $G$ such that $G\cdot\boldsymbol{p}_{\rho}= \boldsymbol{p}_{\sigma}$ if and only if the thermomajorization curve of $\rho$ lies above or on the thermomajorization curve of $\sigma$.
    \end{enumerate}
\end{theorem}

\begin{proof}
For the proof, see Theorems 6 and 7 and Remark 10 in \cite{lostaglio2019introductory}.
\end{proof}

Theorem \ref{thm:epsilonthermoandcurves} shows whether the existence of quantum thermal operations or Gibbs preserving stochastic matrices is related to thermomajorization curves. Next, we list a theorem related to thermomajorization coincidence.

\begin{theorem}\label{classicalcoincide}
Consider two states $\rho$ and $\sigma$ with Hamiltonian $H$. If the thermomajorization curve of $\rho$ lies above or on the thermomajorization curve of $\sigma$, then $D(\rho||\tau)\geq D(\sigma||\tau)$. The equality signs hold if and only if two curves coincide.
\end{theorem}

\begin{proof}
    Suppose the population vectors of state $\rho$ and $\sigma$ are $\boldsymbol{p}_{\rho}$ and $\boldsymbol{p}_{\sigma}$. The thermomajorization curve of $\rho$ lies above and on the thermomajorization curve of $\sigma$. From Theorem \ref{thm:epsilonthermoandcurves} there exists a Gibbs preserving stochastic matrix $G$ such that $G\cdot\boldsymbol{p}_{\rho}=\boldsymbol{p}_{\sigma}$. Since $\rho$ and $\sigma$ are block-diagonal, the relative entropy is the same as its classical version:
    \begin{align}
        D(\rho||\tau) &= D(\boldsymbol{p}_{\rho}||\boldsymbol{p}_{\tau})\\
        D(\sigma||\tau) &= D(\boldsymbol{p}_{\sigma}||\boldsymbol{p}_{\tau})~.
    \end{align}
    From data processing inequality, we have:
    \begin{align}
        D(\boldsymbol{p}_{\rho}||\boldsymbol{p}_{\tau})&\geq D(G\cdot\boldsymbol{p}_{\rho}||G\cdot\boldsymbol{p}_{\tau})  \\
        &= D(\boldsymbol{p}_{\sigma}||\boldsymbol{p}_{\tau})~.
    \end{align}
    This completes the first part of proof. 
    
    The data processing inequality saturates if and only if there exists a recovery map $R$ defined by $R_{ij}=G_{ji} (\boldsymbol{p}{_{\tau}})_{i}/(\boldsymbol{p}{_{\tau}})_{j}$ such that $R\cdot\boldsymbol{p}_{\sigma}=\boldsymbol{p}_{\rho}$, where $(\cdot)_{ij}$ is the $ij$ component of the matrix \cite{wilde2013quantum}. It is straightforward to show that $R$ preserves the Gibbs distribution: $R\cdot \boldsymbol{p}{_{\tau}}=\boldsymbol{p}{_{\tau}}$. So, $\boldsymbol{p}{_{\sigma}}$ thermomajorizes $\boldsymbol{p}{_{\rho}}$. $\sigma$'s thermomajorization curve lies above or on $\rho$'s thermomajorization. Hence, their thermomajorization curves coincide.
\end{proof}
Now, we write down the precise version of Theorem \ref{thm:coincide}:

\begin{theorem}\label{thm:precisecoincide}
    Consider two $d-$dimension diagonal states $\rho$ and $\sigma$. The following two are equivalent:
    \begin{enumerate}
        \item \label{coincideA1} The thermomajorization curves of states ${\rho}$ and ${\sigma}$ coincide.
        \item \label{coincideA2} For all $\epsilon_{1},\epsilon_{2}>0$, there exists a thermal operation $\mathcal{E}$ such that $||\mathcal{E}(\rho)-\sigma||_{1}<\epsilon_{1}$ and the corresponding entropy production $\Sigma_{\rho \to \mathcal{E}(\rho)}<\epsilon_{2}$~.
    \end{enumerate} 
\end{theorem}

\begin{proof}
\ref{coincideA1} $\to$ \ref{coincideA2}: 
Since the thermomajorization curves of state $\rho$ and $\sigma$ coincide, for any $\epsilon$ there exists a thermal operation $\mathcal{E}$ such that $||\mathcal{E}(\rho)-\sigma||_{1} <\epsilon$. The upper bound of entropy production is given as follows. The thermomajorization curves of $\rho$ and $\sigma$ coincide. From Theorem \ref{classicalcoincide}, we have $D_{1}(\rho||\tau)=D_{1}(\sigma||\tau)$. By definition of the relative entropy, $D_{1}(\mathcal{E}(\rho)||\tau)=S(\mathcal{E}(\rho)) - \beta  \mathrm{Tr}(\mathcal{E}(\rho) H)$. $\mathcal{E}(\rho)$ and $\sigma$ are $\epsilon$ close. From Zhang–Audenaert inequality \cite{wilde2013quantum}, we have:
\begin{align}
    |S(\sigma)-S(\mathcal{E}(\rho))|\leq \frac{1}{2} \epsilon (\log d - 1) + H(\epsilon)~,
\end{align}
where $H(\cdot)$ is the binary entropy function. This gives the entropy difference upper bound. Second term in relative entropy is bounded by:
\begin{align}
    |\mathrm{Tr}((\sigma-\mathcal{E}(\rho))H)|\leq \epsilon E_{\max}~,
\end{align}
    where $E_{\max}$ is the maximal eigenvalues in the Hamiltonian $H$. The relative entropy is bounded by:
\begin{align}
    |D_{1}(\sigma||\tau) & - D_{1}(\mathcal{E}(\rho)||\tau)| \nonumber \\
    & \quad \leq \frac{1}{2} \epsilon (\log d - 1) + \epsilon \beta E_{\max} + H(\epsilon)~.
\end{align}
    Since we have $D_{1}(\rho||\tau)=D_{1}(\sigma||\tau)$:
\begin{align}
    |D_{1}(\rho||\tau) & - D_{1}(\mathcal{E}(\rho)||\tau)| \nonumber \\
    & \quad \leq \frac{1}{2} \epsilon (\log d - 1) + \epsilon \beta E_{\max} + H(\epsilon)~.
\end{align}
    Let $f(\epsilon)=\frac{1}{2} \epsilon (\log d - 1) + \epsilon \beta E_{\max} + H(\epsilon)$ is an increasing function about $\epsilon$ in $[0,\frac{1}{2}]$ and $f(0)=0$, $f(1/2)= \frac{1}{4}(\log d - 1) + \frac{1}{2} \beta E_{\max} + \log 2$. We denote the corresponding inverse function in $[0, f(1/2)]$ as $f^{-1}(x)$.  For any $\epsilon_{1},\epsilon_{2}>0$, if $\epsilon_{2}<f(1/2)$, we can take $\epsilon=\frac{1}{2}\min \{\epsilon_{1}, f^{-1}(\epsilon_{2})\} $. Then $||\mathcal{E}(\rho)-\sigma||_{1} <\epsilon \leq \frac{1}{2}\epsilon_{1}<\epsilon_{1}$ and $|D_{1}(\rho||\tau)-D_{1}(\mathcal{E}(\rho)||\tau)|\leq f(\epsilon)\leq f(\frac{1}{2}f^{-1}(\epsilon_{2}))<f(f^{-1}(\epsilon_{2}))=\epsilon_{2}$. If $\epsilon_{2}\geq f(1/2)$, we take $\epsilon=\min \{\epsilon_{1}, \frac{1}{4} \}$. Then $||T(\rho)-\sigma||_{1} <\epsilon \leq \epsilon_{1}$ and $|D_{1}(\rho||\tau)-D_{1}(T(\rho)||\tau)|\leq f(\epsilon)\leq f(1/4)<f(1/2)\leq \epsilon_{2}$.

\ref{coincideA2} $\to$ \ref{coincideA1} by contradiction: Since for all $\epsilon_{1}>0$, there exists a thermal operation $\mathcal{E}$ such that $||\mathcal{E}(\rho)-\sigma||_{1}<\epsilon_{1}$, the thermomajorization curve $\rho$ lies above or on the the thermomajorization curve $\sigma$ (Theorem \ref{thm:epsilonthermoandcurves}). Assume the thermomajorization curves of $\rho$ and $\sigma$ do not coincide, then $|D_{1}(\rho||\tau)-D_{1}(\sigma||\tau)|\neq0$ (Theorem \ref{classicalcoincide}).  We give a bound on $|D_{1}(\rho||\tau)-D_{1}(\sigma||\tau)|$:
    \begin{align}
        |& D_{1}(\rho||\tau) - D_{1}(\sigma||\tau)| \nonumber \\
        &\leq|D_{1}(\rho||\tau)-D_{1}(\mathcal{E}(\rho)||\tau)|
        +|D_{1}(\mathcal{E}(\rho)||\tau)-D_{1}(\sigma||\tau)| \nonumber \\
        &<\epsilon_{2} + \frac{1}{2} \epsilon_{1} (\log d - 1) 
        + \epsilon_{1} \beta E_{\max} + H(\epsilon_{1})~.
    \end{align}
    Since $\epsilon_{1},\epsilon_{2}$ are arbitrary and:
    \begin{align}\label{eqn:proofimit}
        \lim_{\epsilon_{1},\epsilon_{2}\to 0 }\epsilon_{2} + 2 \epsilon_{1} (\log d - 1) + \epsilon_{1} \beta E_{\max} + H(\epsilon_{1})=0~.
    \end{align}
    We know $|D_{1}(\rho||\tau)-D_{1}(\sigma||\tau)|\neq0$ and is a finite fixed positive number. This contradicts with Eq. \eqref{eqn:proofimit}. So the two curves must coincide. 
\end{proof}

\subsection{Proof to Theorem \ref{thm: catalystwith0entropyproduction}}

We first write down the precise version of Theorem \ref{thm: catalystwith0entropyproduction}.
\begin{theorem}
    Consider two states $\rho$ and $\sigma$ with Hamiltonian $H$ and a catalyst state $c$ with Hamiltonian $H_{c}$. For any $\epsilon_{1},\epsilon_{2}>0$, if there exists a thermal operation $\mathcal{E}$ such that $||\mathcal{E}(\rho\otimes c)-\sigma\otimes c||_{1}<\epsilon_{1}$ and the corresponding entropy production $\Sigma_{\rho\otimes c\to \sigma\otimes c}<\epsilon_{2}$, then there exists another thermal operation $\mathcal{T}$ such that $||\mathcal{T}(\rho)-\sigma||_{1}<\epsilon_{1}$ and the corresponding entropy production $\Sigma_{\rho\to \sigma}<\epsilon_{2}$.
\end{theorem}

\begin{proof}
From Theorem \ref{thm:precisecoincide}, the thermomajorization curves of $\rho \otimes c$ and $\sigma \otimes c$ coincide. Next, we show that the thermomajorization curves of $\rho$ and $\sigma$ coincide. Suppose $\boldsymbol{f}_{\rho, H}=\{(y^{(\rho)}_{i},k^{(\rho)}_{i})\}_{i}$, $\boldsymbol{f}_{\sigma,H}=\{ (y^{(\sigma)}_{i},k^{(\sigma)}_{i})\}_{i}$, and $\boldsymbol{f}_{c, H_{c}}=\{(y^{(c)}_{i},k^{(c)}_{i})\}_{i}$, respectively. Here, we coarse grain all segments with the same slopes and there are no repetitive slopes in $\boldsymbol{f}_{\rho, H}$, $\boldsymbol{f}_{\sigma,H}$, and $\boldsymbol{f}_{c, H_{c}}$. The largest slope of the $\rho \otimes c $ curve is $k^{(\rho)}_{1} \cdot k^{(c)}_{1}$ with $y-$coordinate change $y^{(\rho)}_{1} \cdot y^{(c)}_{1}$. And, the largest slope of the $\sigma \otimes c $ curve is $k^{(\sigma)}_{1} \cdot k^{(c)}_{1}$ with $y$ coordinate change $y^{(\sigma)}_{1} \cdot y^{(c)}_{1}$. Since the curves of $\rho \otimes c $ and $\sigma \otimes c$ coincide, we must have:
\begin{align}
        k^{(\rho)}_{1} \cdot k^{(c)}_{1} &= k^{(\sigma)}_{1} \cdot k^{(c)}_{1} \\
        y^{(\rho)}_{1} \cdot y^{(c)}_{1} &= y^{(\sigma)}_{1} \cdot y^{(c)}_{1}
        ~.
\end{align}
This leads to $k^{(\rho)}_{1}=k^{(\sigma)}_{1}$ and $y^{(\rho)}_{1}=y^{(\sigma)}_{1}$.

We can remove the contribution of  $(k^{(\rho)}_{1}, y^{(\rho)}_{1})$ and $(k^{(\sigma)}_{1}, y^{(\sigma)}_{1})$ from the curves $\rho \otimes c$ and $\sigma \otimes c$, respectively. The two new curves also coincide since we remove identical segments from two identical thermomajorization curves. With the two new curves and the similar argument, we have:
\begin{align}
        k^{(\rho)}_{2} \cdot k^{(c)}_{1} &= k^{(\sigma)}_{2} \cdot k^{(c)}_{1} \\
        y^{(\rho)}_{2} \cdot y^{(c)}_{1} &= y^{(\sigma)}_{2} \cdot y^{(c)}_{1}
        ~,
\end{align}
which lead to $k^{(\rho)}_{2}=k^{(\sigma)}_{2}$ and $y^{(\rho)}_{2}=y^{(\sigma)}_{2}$. If we continue this procedure, we can show that $k^{(\rho)}_{i}=k^{(\sigma)}_{i}$ and $y^{(\rho)}_{i}=y^{(\sigma)}_{i}$ for any $i$. Then the $\rho$ and $\sigma$ curves coincide. So for any $\epsilon_{1},\epsilon_{2}$, there exists another thermal operation $\mathcal{T}$ such that $||\mathcal{T}(\rho)-\sigma||_{1}<\epsilon_{1}$ and the corresponding entropy production $\Sigma_{\rho\to \sigma}<\epsilon_{2}$.
\end{proof}

\subsection{Proof to Theorem \ref{thm:coincide2}}
We first show a theorem regarding to equal of $\alpha$- R\'enyi entropy.

\begin{theorem}\label{thm:equalalpharenyiapp}
    Consider $\boldsymbol{p},\boldsymbol{q}$ two $m-$dimension probability distributions. If $D_{\alpha}(\boldsymbol{p}||\boldsymbol{\eta})=D_{\alpha}(\boldsymbol{q}||\boldsymbol{\eta})$ for any $\alpha\in \mathbf{R}$ where $\boldsymbol{\eta}$ is the $m-$dimension uniform distribution. Then $p,q$ are same up to a reorder.
\end{theorem}

\begin{proof}
    The $\alpha-$R\'enyi divergence of $\boldsymbol{p}$ from the uniform distribution $\boldsymbol{\eta}$ is:
    \begin{align}
       D_{\alpha}(\boldsymbol{p}||\boldsymbol{\eta})= \frac{1}{\alpha-1}  \log (||\boldsymbol{p}||_{\alpha})^{\alpha} + \log m~,
       \end{align}
    where $||\cdot||_{\alpha}$ is the $\alpha-$norm. And the $\infty$-R\'enyi divergence picks the maximal component in the distribution:
    \begin{align}
       D_{\infty}(\boldsymbol{p}||\boldsymbol{\eta}) = \max_{p_{i}} \log p_{i}+\log m~. 
    \end{align}
    The equal $\alpha-$R\'enyi means $\boldsymbol{p}$ and $\boldsymbol{q}$ have the same $\alpha-$norm. Taking $\alpha\to\infty$ gives:
    \begin{align}
      \max_{p_{i}} p_{i} = \max_{q_{i}} q_{i} ~.
    \end{align}
     $\boldsymbol{p}$ and $\boldsymbol{q}$ have the same maximal component. We can remove the corresponding maximal component from both distributions and they still have the same $\alpha-$norm:
     \begin{align}
         ||\boldsymbol{p} \setminus \{\max_{p_{i}} p_{i}\}||_{\alpha} = ||\boldsymbol{q} \setminus \{\max_{q_{i}} q_{i}\}||_{\alpha}~.
     \end{align}
     Again, we take $\alpha \to \infty$ which gives the second maximal components in $\boldsymbol{p}$ and $\boldsymbol{q}$ are same. Continuing this procedure leads that $\boldsymbol{p}$ and $\boldsymbol{q}$ are same up to a reorder.
\end{proof}

Now, we are ready to prove Theorem \ref{thm:coincide2}.
\begin{theorem}[Theorem \ref{thm:coincide2} ]\label{thm:coincide2app}
Given a system with Hamiltonian $H$ and states $\rho$ and $\sigma$, the following are equivalent: 
\begin{enumerate}
    \item \label{Coincide21a} Thermomajorization curves of $\rho$ and $\sigma$ coincide.
    \item \label{Coincide22a} $D_{\alpha}(\rho \| \tau) = D_{\alpha}(\sigma \| \tau),\  \text{~for~all~} \alpha \in \mathbb{R}$.
\end{enumerate}
\begin{proof}
\ref{Coincide21a} $\to$ \ref{Coincide22a}: Since the thermomajorization curves of $\rho$ and $\sigma$ coincide. There exists two Gibbs-preserving stochastic map $E$ and $G$ such that $E \cdot \boldsymbol{p}_{\rho} = \boldsymbol{p}_{\sigma}$ and $G \cdot \boldsymbol{p}_{\sigma} = \boldsymbol{p}_{\rho}$. With the data processing inequality of R\'enyi $\alpha-$divergence, for all $\alpha\in \mathbb{R}$, we have \cite{van2014renyi,brandao2015second}:
\begin{align}
    D_{\alpha}(\boldsymbol{p}_\rho \| \boldsymbol{p}_\tau) & \geq D_{\alpha}(E\boldsymbol{p}_\rho \| E\boldsymbol{p}_\tau)= D_{\alpha}(\boldsymbol{p}_\sigma \| \boldsymbol{p}_\tau) \\
    D_{\alpha}(\boldsymbol{p}_\sigma \| \boldsymbol{p}_\tau) & \geq D_{\alpha}(G\boldsymbol{p}_\sigma \| G\boldsymbol{p}_{\tau})= D_{\alpha}(\boldsymbol{p}_\rho \| \boldsymbol{p}_\tau)~.
\end{align}
Then $D_{\alpha}(\rho \| \tau) = D_{\alpha}(\sigma \| \tau)$, for all $\alpha \in \mathbb{R}$.
    
\ref{Coincide22a} $\to$ \ref{Coincide21a}: We use a basic tool in single-shot thermodynamics---the embedding map \cite{brandao2015second}. Here, the embedding map $\Gamma$ sends one distribution to a larger dimension distribution. And, $\Gamma$ maps the Gibbs distribution of the system to a larger uniform distribution (To avoid some technicalities, we assume the Gibbs distribution is rational). $\Gamma$ has the following properties \cite{brandao2015second}:
\begin{enumerate}
    \item $D_{\alpha}(\boldsymbol{p}_\sigma \| \boldsymbol{p}_\tau)= D_{\alpha}(\Gamma(\boldsymbol{p}_\sigma) \| \Gamma(\boldsymbol{p}_\tau))~.$
    \item $\boldsymbol{p}$ thermomajorizes $\boldsymbol{q}$ with respect to the Gibbs distribution if and only if $\Gamma(\boldsymbol{p})$ majorizes $\Gamma(\boldsymbol{q})$.
\end{enumerate}

From $D_{\alpha}(\rho \| \tau) = D_{\alpha}(\sigma \| \tau)$, we have
\begin{align}
    D_{\alpha}(\Gamma(\boldsymbol{p}_\rho) \| \Gamma(\boldsymbol{p}_\tau)) = D_{\alpha}(\Gamma(\boldsymbol{p}_\sigma) \| \Gamma(\boldsymbol{p}_\tau))
\end{align}
for all $\alpha$ and $\Gamma(\boldsymbol{p}_\tau)$ is a uniform distribution. From Theorem. \ref{thm:equalalpharenyiapp}, 
\begin{align}
    \Gamma(\boldsymbol{p}_\rho)=\Gamma(\boldsymbol{p}_\sigma)
\end{align}
up to a permutation. Then $\Gamma(\boldsymbol{p}_\rho)$ and $\Gamma(\boldsymbol{p}_\sigma)$ majorize each other. We have $\boldsymbol{p}_\rho$ and $\boldsymbol{p}_\sigma$ thermomajorize each other. Hence the thermomajorization curves of $\rho$ and $\sigma$ coincide.
\end{proof}
\end{theorem}

\section{Details on the constructions for any state transitions}\label{appendix:detailconstruction}

Proving this requires constructing the efficient work reservoir for $(\boldsymbol{p}_{S},\boldsymbol{E}_{S})\to(\boldsymbol{p}'_{S},\boldsymbol{E}_{S})$. We denote initial and final cumulative probability distributions of the system as $\boldsymbol{P}=\{{P}_{i}\}_{i\in\{0\}\cup \mathcal{S}}$ and $\boldsymbol{P}'=\{{P}_{i}'\}_{i\in\{0\}\cup \mathcal{S}}$, where ${P}_{0}={P}'_{0}=0$. And, they satisfy ${P}_{i}-{P}_{i-1}={p}_{i}$ and ${P}_{i}'-{P}_{i-1}'={p}_{i}'$ for all $i\in \mathcal{S}$. Let $\boldsymbol{R}=\{R_{i}\}_{i\in \{0\}\cup \mathcal{W}}=\boldsymbol{P} \cup \boldsymbol{P}'$---a cumulative probability distribution where $\mathcal{W}=\{1,2,\cdots,N\}$ and $N$ is the dimension of corresponding probability distribution, denoted $\boldsymbol{r}=\{r_{i}\}_{i\in\mathcal{W}}$. Then there exist mappings $\lambda,\lambda':\mathcal{W} \rightarrow \mathcal{S}$ from $\mathcal{W}=\{1,2, \cdots , N\}$ to system eigenstates $\mathcal{S}=\{1,2, \cdots n\}$ such that: 
\begin{align}
        p'_i & =\sum_{j \in \lambda^{-1}(i) }r_j \label{initialworkresprob}
        \\p_i & =\sum_{j \in \lambda'^{-1}(i) }r_j ~.\label{finalworkresprob}
\end{align}
Appendix \ref{appendix:lambdaconstruction} constructs the mappings $\lambda$ and $\lambda'$.

The work reservoir probabilities are $\boldsymbol{p}_{W}$ and $\boldsymbol{p}'_{W}$, where $\boldsymbol{p}_{W}=(\boldsymbol{r},\boldsymbol{0})$ and $\boldsymbol{p}_{W}'=(\boldsymbol{0},\boldsymbol{r})$. And the energy levels are $\boldsymbol{E}_W=\{\epsilon_{1}, \cdots, \epsilon_{N}, \epsilon_{1}', \cdots, \epsilon_{N}'\}$. To make this efficient for a $n-$dimensional transition $\boldsymbol{p}_{S} \to \boldsymbol{p}_{S}'$ in a system with energy levels $\boldsymbol{E}_S = \{e_1,\cdots ,e_n\}$, we require that:
\begin{enumerate}[label=(\alph*)]
 \item \label{workresslope}There exist sets of positive numbers $\{k_i\}_{i=1}^n$ and $\{k'_i\}_{i=1}^n$ such that:
    \begin{align}
        r_j e^{\beta \epsilon_j}=k_i, \text{ for all } j \in \lambda^{-1}(i)~,
        \label{initialworkresslope}
    \end{align}
    and:
    \begin{align}
        r_j e^{\beta \epsilon'_j}=k'_i, \text{ for all } j \in \lambda'^{-1}(i)~.
        \label{finalworkresslope}
    \end{align}
\item \label{sameslope} And:
\begin{align}
p_ie^{\beta e_{i}} \cdot k_{j}  = p_{j}'e^{\beta e_{j}} \cdot k_{i}', \text{for any pair } (i,j)~.
\end{align}
\end{enumerate} 

According to Theorem \ref{thm:coincide}, zero entropy is produced if and only if the thermomajorization curves of $\boldsymbol{p}_{SW}= \{p_ir_j\}_{i,j}$ over the energy levels $\boldsymbol{E}_{SW}= \{ e_i+\epsilon_j\}_{i,j}$, and $\boldsymbol{p}'_{SW}= \{p'_ir_j\}_{i,j}$ over the energy levels $\boldsymbol{E}'_{SW}= \{ e_i+\epsilon'_j\}_{i,j}$ are the same. (We neglect contributions from zero components in probability distribution.) From Eqs. \eqref{initialworkresprob} and \eqref{initialworkresslope}, the thermomajorization curve $f_{\boldsymbol{p}_{W},\boldsymbol{E}_{W}}$ has at most $n$ distinct slopes $\{ k_{i}\}_{i=1}^{n}$ with corresponding $y-$coordinate change $\{p_{i}'\}_{i=1}^{n}$; i.e.,  $\boldsymbol{f}_{\boldsymbol{p}_{W},\boldsymbol{E}_{W}}=\{(p_{i}', k_{i})\}_{i=1}^{n}$. For the system, we have $\boldsymbol{f}_{\boldsymbol{p}_{S},\boldsymbol{E}_{S}}=\{(p_{i}, p_{i}e^{\beta e_{i}})\}_{i=1}^{n}$. And, so, from Eq. \eqref{eqn:compositethermalcurve} we have  $\boldsymbol{f}_{\boldsymbol{p}_{SW},\boldsymbol{E}_{SW}}=\{p_{i}p_{j}',p_{i}e^{\beta e_{i}} k_{j}\}_{i,j=1}^{n}$. Similarly, we have $\boldsymbol{f}_{\boldsymbol{p}'_{SW},\boldsymbol{E}_{SW}}=\{p_{i}'p_{j},p'_{i}e^{\beta e_{i}} k'_{j}\}_{i,j=1}^{n}=\{p_{i}p_{j}',p'_{j}e^{\beta e_{j}} k'_{i}\}_{i,j=1}^{n}$. From Condition \ref{sameslope}, $\boldsymbol{f}_{\boldsymbol{p}_{SW},\boldsymbol{E}_{SW}} = \boldsymbol{f}_{\boldsymbol{p}'_{SW},\boldsymbol{E}_{SW}}$. That is, the two thermomajorization curves coincide. 

Next, we determine the energy levels $\{\epsilon_{1}, \cdots, \epsilon_{N}\}$ and $\{\epsilon_{1}', \cdots, \epsilon_{N}'\}$ explicitly. We fix one energy level, for example $\epsilon_{1}$, and express all other energy levels in terms of it. To determine $k_{i}$, from Condition \ref{sameslope} we have:
\begin{align}
    p_{j} e^{\beta e_{j}} k_{i} &= p_{i}' e^{\beta e_{i}} k_{j}' \\ 
    p_{j} e^{\beta e_{j}} k_{1} &= p_{1}' e^{\beta e_{1}} k_{j}'~.
\end{align}
Dividing gives:
\begin{align}
    k_{i}= k_{1} \frac{p'_{i}e^{\beta e_{i}}}{p'_{1}e^{\beta e_{1}}}
    ~,
\end{align}
from which we have:
\begin{align}\label{eqn:initiallevels}
    \epsilon_{x} = \epsilon_{1} + \kt \log \Bigg( \frac{r_{1}}{r_{x}}\frac{p_{i}'e^{\beta e_{i}}}{p_{1}'e^{\beta e_{1}}} \Bigg)
    ~,
\end{align}
for all $x \in \lambda^{-1}(i)$.
$k_{i}'$ can be determined through Condition \ref{sameslope} by setting $j=1$:
\begin{align}
    k_{i}'= k_{1}\frac{p_{i}e^{\beta e_{i}}}{p'_{1}e^{\beta e_{1}}}
    ~.
\end{align}
From which we have:
\begin{align}\label{eqn:finallevels}
    \epsilon'_{x} = \epsilon_{1} + \kt \log \Bigg( \frac{r_{1}}{r_{x}} \frac{p_{i}e^{\beta e_{i}}}{p'_{1}e^{\beta e_{1}}} \Bigg)
    ~,
\end{align}
for all $x \in \lambda'^{-1}(i)$.
The average extractable work from the state transition is:
\begin{align}
    \avg{W} = \sum_{x=1}^{N} r_{x} (\epsilon_{x}' - \epsilon_{x})
\end{align}
and we have:
\begin{align}
\sum_{x=1}^{N} r_{x} \epsilon_{x} &=  \sum_{x=1}^{N} r_{x} \Bigg[ \epsilon_{1} + \kt \log \Bigg( \frac{r_{1}}{r_{x}}\frac{p_{i}'e^{\beta e_{i}}}{p_{1}'e^{\beta e_{1}}} \Bigg) \Bigg] \\
    &= \kt D_{1}(\boldsymbol{p}_{S}'||\boldsymbol{\tau}_{S}) +C \\
\sum_{x=1}^{N} r_{x} \epsilon'_{x} &=  \sum_{x=1}^{N} r_{x} \Bigg[ \epsilon_{1} + \kt \log \Bigg( \frac{r_{1}}{r_{x}}\frac{p_{i}e^{\beta e_{i}}}{p_{1}'e^{\beta e_{1}}} \Bigg) \Bigg] \\
    &= \kt D_{1}(\boldsymbol{p}_{S}||\boldsymbol{\tau}_{S}) +C
    ~,
\end{align}
where:
\begin{align}
C=\epsilon_{1}-e_{1} + \kt \Big(\sum_{x} r_x \log \frac{r_{1}}{r_{x} p'_{1}}- Z_{S}\Big)
\end{align}
is a constant. This recovers the stochastic thermodynamics result:
\begin{align}
\avg{W} =k_{B}T \left[D_{1}(\boldsymbol{p}_{S}||\boldsymbol{\tau}_{S})-D_{1}(\boldsymbol{p}'_{S}||\boldsymbol{\tau}_{S})\right]
  ~.
\end{align}

This gives the distribution $\{r_{i}\}_{i\in\mathcal{W}}$ and energy levels (Eqs. \eqref{eqn:initiallevels} and \eqref{eqn:finallevels}) for the efficient work reservoir explicitly, completing the construction.

\section{Constructing \texorpdfstring{$\lambda$}{lambda} and \texorpdfstring{$\lambda'$}{lambdaprime}} \label{appendix:lambdaconstruction}

This section constructs the mappings $\lambda$ and $\lambda'$ in Eqs. \eqref{initialworkresprob} and \eqref{finalworkresprob}. We have $p_{i}'=P_{i}'-P_{i-1}'$ and $P_{i}',\ P_{i-1}' \in \boldsymbol{R}$. We define sets $\sigma'_{i} \subseteq \{1,2,\cdots N\}$ such that:
\begin{align}
    \sum_{i\in \sigma'_{i}} r_{i} &= P_{i}'~.
\end{align}
We have $\sigma'_{0}=\{\}$, $\sigma'_{n}=\{1,2,\cdots N\}$, and $\sigma'_{0} \subset \sigma'_{1} \subset \cdots \subset \sigma'_{n}$. $\lambda:\{1,2,\cdots, N\} \to \{1,2,\cdots, n\}$ is defined by
$\lambda(\sigma'_{i}\setminus\sigma'_{i-1})=i$ for $i=\{1,2,\cdots,n\}$. We have:
\begin{align}
    \sum_{j\in \lambda^{-1}(i)} r_{j} &= \sum_{j\in \sigma'_{i}\setminus\sigma'_{i-1}} r_{j} \nonumber \\
    &=\sum_{j\in \sigma'_{i}} r_{j}-\sum_{j\in\sigma'_{i-1}} r_{j} \nonumber\\
    &=  P_{i}'-P_{i-1}' = p_{i}~.
\end{align}
We define $\lambda'$ similarly.

\section{A different way to construct efficient work reservoirs}
\label{appendix:diffworkreservoir}

This section presents an alternative construction of a work reservoir for trivial Hamiltonian $\boldsymbol{E}_{S}=0$. More directly, the efficient work reservoir for a transition is not unique.

Consider a $2n^2$-dimension work reservoir of which energy levels are $\boldsymbol{E}_{W} = \{\epsilon_{11},\cdots,\epsilon_{nn},\epsilon_{11}',\cdots\epsilon_{nn}'\}$. The initial work reservoir probability distribution is $(\boldsymbol{p}\otimes\boldsymbol{p}',\boldsymbol{0})$ and the final is $(\boldsymbol{0},\boldsymbol{p}\otimes\boldsymbol{p}')$. The energy levels satisfy:
\begin{align}
p_{1}p_{j}'e^{\beta \epsilon_{1j}}&=\cdots=p_{n}p_{j}'e^{\beta \epsilon_{nj}}=k_{j} \  \mathrm{for}\  j=1,\cdots,n \\ 
p_{i}p_{1}'e^{\beta \epsilon_{i1}'}&=\cdots=p_{i}p_{n}'e^{\beta \epsilon_{in}'}=k_{i}' \ \mathrm{for}\  i=1,\cdots,n \\
k_{i}'p_{j}'&=p_{i}k_{j} \  \mathrm{for \ any \ pair~}(i,j)
  ~.
\end{align}
These conditions ensure that the initial total curve coincides with the final curve. We have:
\begin{align}
    \epsilon_{ij}&= \kt \log\frac{k_{j}}{p_{i}p_{j}'} \\
    \epsilon_{ij}'&= \kt \log\frac{k_{i}'}{p_{i}p_{j}'}
    ~.
\end{align}
The amount of work that can be extracted is:
\begin{align}
    \avg{W}&= \sum_{ij} p_{i}p_{j}' (\epsilon_{ij}'-\epsilon_{ij}) \\
    &= \kt \sum_{ij} p_{i}p_{j}'\log\frac{k_{i}'}{k_{j}} \\
    &= \kt \sum_{ij} p_{i}p_{j}'\log\frac{p_{i}}{p_{j}'} \\
    &= \kt (H(\boldsymbol{p}')-H(\boldsymbol{p})).
\end{align}

\section{Efficient work reservoir examples}
\label{appendix:examplesofreservoir}

The following analyzes several efficient work reservoirs for nontrivial Hamiltonians and time-dependent Hamiltonians. 

We first study a nontrivial Hamiltonian. Consider a two-level system with the Gibbs distribution $\boldsymbol{\tau}_{S}=(e^{-\beta e_{1}}/Z_{S},e^{-\beta e_{2}}/Z_{S})=(\frac{2}{3},\frac{1}{3})$. We begin with the distribution $\boldsymbol{p}_{S}=(\frac{1}{2},\frac{1}{2})$ and end with $\boldsymbol{p}_{S}'=(\frac{1}{3},\frac{2}{3})$. For the efficient work reservoir, we set $p_{W}=(\boldsymbol{r},\boldsymbol{0})$ and $p_{W}'=(\boldsymbol{0},\boldsymbol{r})$, where $\boldsymbol{r}=(\frac{1}{2},\frac{1}{3},\frac{1}{6})$. The work reservoir energy levels satisfy $\exp(-\beta \epsilon_{i})=\left\{\frac{1}{4}a, \frac{2}{3}a, \frac{1}{12}a\right\}$ for $i=1,2,3$ and $\exp(-\beta \epsilon_{i}')=\left\{\frac{1}{3}a, \frac{4}{9}a, \frac{2}{9}a\right\}$, for $i=1,2,3$ and where $a$ is a positive number. The work reservoir's energy change is:
\begin{align}
    \avg{W}&= \sum_{i=1}^{3} r_{i} (\epsilon_{i}'-\epsilon_{i}) \\
    &= \frac{1}{2} \kt \log\frac{3}{4} + \frac{1}{3} \kt \log\frac{3}{2} + 
    \frac{1}{6} \kt \log\frac{3}{8} \\
    &= \kt D_{1}(\boldsymbol{p}_{S}||\boldsymbol{\tau}_{S}) - \kt D_{1}(\boldsymbol{p}_{S}'|| \boldsymbol{\tau}_{S}) \\
    &= -0.17216 ~ \kt~.
\end{align}
The amount of the work is the negative nonequilibrium free energy difference. Table \ref{tab:nontrivialHamiltonianex} plots the thermomajorization curves.

\begin{table*}[!t]
    \scalebox{0.25}{
   \includegraphics[]{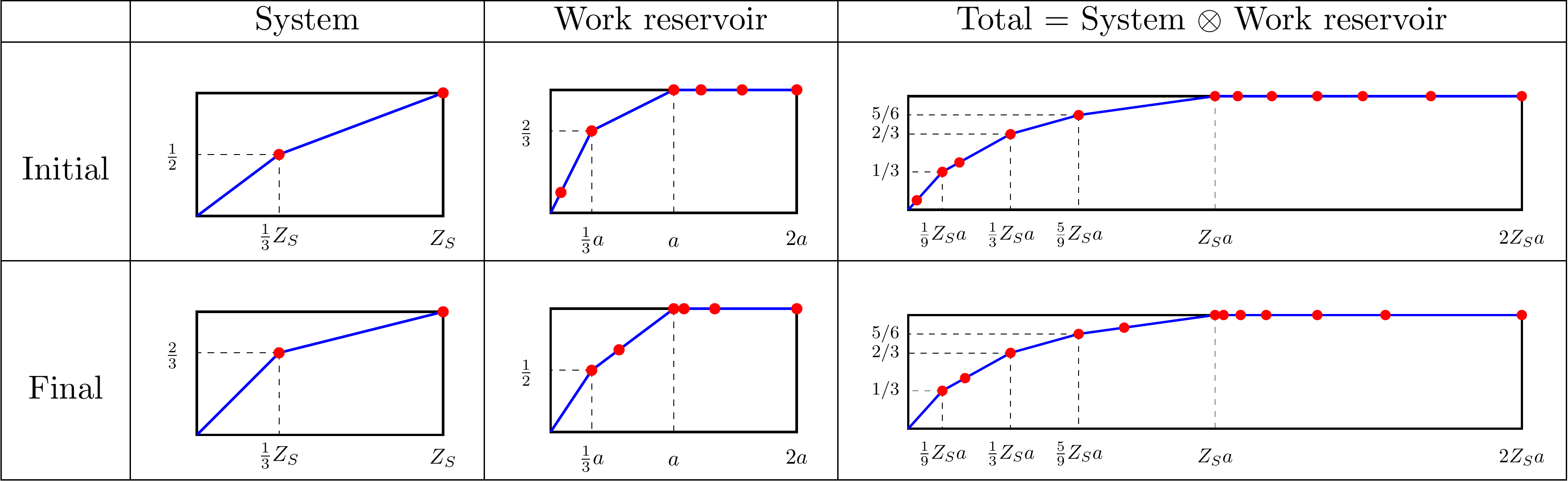}
   }
\caption{Thermomajorization curves with elbow point coordinates of $\rho_{S}$, $\rho_{W}$, $\rho_{SW}$, $\rho_{S}'$, $\rho_{W}'$, and $\rho'_{SW}$ for a state transition with a nontrivial Hamiltonian. For initial work reservoir, the red points $x-$axis coordinates are $\frac{1}{12}a$, $\frac{1}{3}a$, $a$, $\frac{11}{9}a$, $\frac{14}{9}a$ and $2a$, respectively. For initial total curves, the $x-$axis coordinates are $\frac{1}{36}Z_{S}a$, $\frac{1}{9}Z_{S}a$, $\frac{5}{36}Z_{S}a$, $\frac{1}{3}Z_{S}a$, $\frac{5}{9}Z_{S}a$, $Z_{S}a$, $\frac{29}{27}Z_{S}a$, $\frac{32}{27}Z_{S}a$, $\frac{36}{27}Z_{S}a$, $\frac{40}{27}Z_{S}a$, $\frac{46}{27}Z_{S}a$, $2Z_{S}a$, respectively. For final work reservoir, the red points $x-$axis coordinates are $\frac{1}{3}a$, $\frac{5}{9}a$, $a$, $\frac{13}{12}a$, $\frac{4}{3}a$ and $2a$, respectively. For final total curve, the red points $x-$axis coordinates are $\frac{1}{9}Z_{S}a$, $\frac{5}{27}Z_{S}a$, $\frac{1}{3}Z_{S}a$, $\frac{5}{9}Z_{S}a$, $\frac{19}{27}Z_{S}a$, $Z_{S}a$, $\frac{37}{36}Z_{S}a$, $\frac{39}{36}Z_{S}a$, $\frac{42}{27}Z_{S}a$, $\frac{48}{27}Z_{S}a$, $\frac{56}{27}Z_{S}a$, $2Z_{S}a$, respectively.}
\label{tab:nontrivialHamiltonianex}
\end{table*}

The second example concerns a state transition under a time-dependent Hamiltonian. The initial distribution is $\boldsymbol{p}_{S}= (\frac{1}{2},\frac{1}{2})$ and the initial Gibbs distribution is $\boldsymbol{\tau}_{S}= (e^{-\beta e_{1}}/Z_{S}, e^{-\beta e_{2}}/Z_{S})= (\frac{1}{3}, \frac{2}{3})$. The final distribution is $\boldsymbol{p}_{S}'= (\frac{2}{3},\frac{1}{3})$ and final Gibbs distribution is $\boldsymbol{\tau}_{S}'= (e^{-\beta e_{1}'}/Z_{S}', e^{-\beta e_{2}'}/Z_{S}')= (\frac{1}{2}, \frac{1}{2})$. For the efficient work reservoir, we set $\boldsymbol{p}_{W}=(\boldsymbol{r},\boldsymbol{0})$ and $\boldsymbol{p}_{W}'=(\boldsymbol{0},\boldsymbol{r})$ where $\boldsymbol{r}=(\frac{1}{2},\frac{1}{3},\frac{1}{6})$. The work reservoir energy levels satisfy $\exp(-\beta \epsilon_{i})=\left\{\frac{3}{8}a, \frac{1}{2}a, \frac{1}{8}a\right\}$, for $i=1,2,3$ and $\exp(-\beta \epsilon_{i}')=\left\{\frac{1}{3} \frac{Z_{S}}{Z_{S}'}a, \frac{4}{9}\frac{Z_{S}}{Z_{S}'}a, \frac{2}{9}\frac{Z_{S}}{Z_{S}'}a\right\}$ for $i=1,2,3$. The work reservoir's energy change is:
\begin{align}
    \avg{W}&= \sum_{i=1}^{3} r_{i} (\epsilon_{i}'-\epsilon_{i}) \\
    &= \frac{1}{2} \kt \log\frac{9}{8}\frac{Z_{S}'}{Z_{S}} + \frac{1}{3} \kt \log\frac{9}{8}\frac{Z_{S}'}{Z_{S}} \\
    & +\frac{1}{6} \kt \log\frac{9}{16} \frac{Z_{S}'}{Z_{S}}\\
    &= \kt (D_{1}(\boldsymbol{p}_{S}||\boldsymbol{\tau}_{S}) - \log Z_{S}) \\ 
    & -\kt( D_{1}(\boldsymbol{p}_{S}'|| \boldsymbol{\tau}_{S}) - \log Z_{S}') \\
    &= (0.0022585 +\log\frac{Z_{S}'}{Z_{S}} )\kt
    ~.
\end{align}
The amount of work is the nonequilibrium free energy difference. Table \ref{tab:timedepHamiltonianex} plots the thermomajorization curves.

\begin{table*}[!t]
    \scalebox{0.25}{
    \includegraphics[]{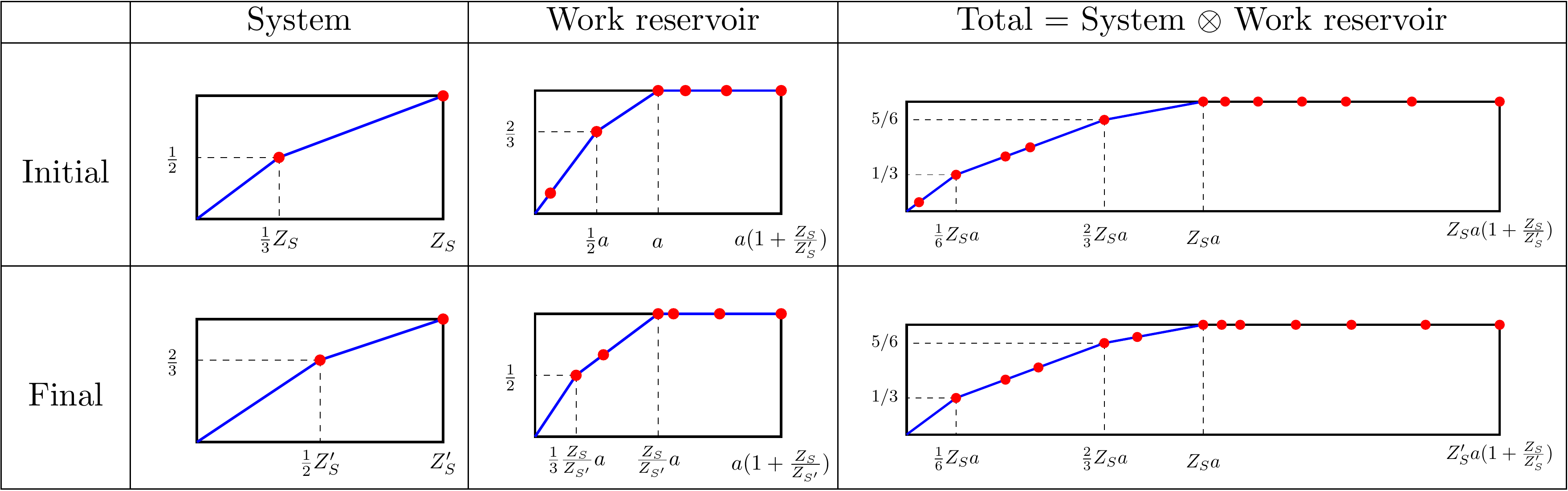}
    }
\caption{Thermomajorization curves with elbow point coordinates of $\rho_{S}$, $\rho_{W}$, $\rho_{SW}$, $\rho_{S}'$, $\rho_{W}'$, and $\rho'_{SW}$ for a state transition under a time-dependent Hamiltonian. For initial work reservoir, the red points $x-$axis coordinates are $\frac{1}{8}a$, $\frac{1}{2}a$, $a$, $(1+\frac{2}{9}\frac{Z_{S}}{Z_{S'}})a$, $(1+\frac{5}{9}\frac{Z_{S}}{Z_{S'}})a$ and $(1+\frac{Z_{S}}{Z_{S'}})a$, respectively. For initial total curves, the $x-$axis coordinates are $\frac{1}{24}Z_{S}a$, $\frac{1}{6}Z_{S}a$, $\frac{1}{3}Z_{S}a$, $\frac{5}{12}Z_{S}a$, $\frac{2}{3}Z_{S}a$, $Z_{S}a$, $Z_{S}a(1+\frac{2}{27}\frac{Z_{S}}{Z_{S}'})$, $Z_{S}a(1+\frac{5}{27}\frac{Z_{S}}{Z_{S}'})$, $Z_{S}a(1+\frac{9}{27}\frac{Z_{S}}{Z_{S}'})$, $Z_{S}a(1+\frac{13}{27}\frac{Z_{S}}{Z_{S}'})$, $Z_{S}a(1+\frac{19}{27}\frac{Z_{S}}{Z_{S}'})$, $Z_{S}a(1+\frac{Z_{S}}{Z_{S}'})$, respectively. For final work reservoir, the red points $x-$axis coordinates are $\frac{1}{3}\frac{Z_{S}}{Z_{S}'}a$, $\frac{5}{9}\frac{Z_{S}}{Z_{S}'}a$, $\frac{Z_{S}}{Z_{S}'}a$, $(\frac{1}{8}+ \frac{Z_{S}}{Z_{S}'})a$, $(\frac{1}{2}+ \frac{Z_{S}}{Z_{S}'})a$ and $(1+ \frac{Z_{S}}{Z_{S}'})a$, respectively. For final total curve, the red points $x-$axis coordinates are $\frac{1}{6}Z_{S}a$, $\frac{7}{18}Z_{S}a$, $\frac{1}{2}Z_{S}a$, $\frac{2}{3}Z_{S}a$, $\frac{7}{9}Z_{S}a$, $Z_{S}a$, $Z_{S}a(1+\frac{1}{16}\frac{Z_{S}'}{Z_{S}})$, $Z_{S}a(1+\frac{2}{16}\frac{Z_{S}'}{Z_{S}})$, $Z_{S}a(1+\frac{5}{16}\frac{Z_{S}'}{Z_{S}})$, $Z_{S}a(1+\frac{8}{16}\frac{Z_{S}'}{Z_{S}})$, $Z_{S}a(1+\frac{12}{16}\frac{Z_{S}'}{Z_{S}})$, $Z_{S}a(1+\frac{Z_{S}'}{Z_{S}})$, respectively.}
\label{tab:timedepHamiltonianex}
\end{table*}


\section{Thermomajorization curves form a monoid}
\label{appendix:monoid}

Abstract algebra defines a \emph{monoid} $M$ as a set equipped with an associative binary operation and an identity element. This appendix establishes that all possible thermomajorization curves at inverse temperature $\beta$ with the regular direct product form a monoid $M_{\beta}$.

For a thermomajorization curve $l$ with $n$ distinct slopes, we use a set with $n$ tuples to represent it:
\begin{align}
    \boldsymbol{l} = \{(y_{1},k_{1}),\cdots,(y_{n},k_{n})\}
    ~,
\end{align}
where $y_{i}$ and $k_{i}$ are the $y-$coordinate change and the slope of the $i-$th segment that satisfy $k_{1}>\cdots>k_{n}>0$ and $y_{1}+\cdots +y_{n}=1$. (We neglect subscripting with $\rho$ and $H$.) Note that this definition is not one to one: For a thermomajorization curve $l$, there may be many states corresponding to curve $l$. This appendix uses the thermomajorization curve $l$ and its representation $\boldsymbol{l}$ interchangeably.

The binary operation is defined as:
\begin{align}
\boldsymbol{l} \otimes \boldsymbol{m} := \left\{(y_{i}^{l}y_{j}^{m}, k_{i}^{l}k_{j}^{m}) \right\}_{i,j}/\sim
    ~,
\end{align}
where $\boldsymbol{l}=\{(y_{i}^{l}, k_{i}^{l})\}_{i}$, $\boldsymbol{m}=\{(y_{i}^{m}, k_{i}^{m})\}_{i}$, and $\sim$ means the segments with the same slopes are combined. The identity element is $\boldsymbol{I}=\{(1,1)\}$.

Verifying that the set of all thermomajorization curves forms a monoid $M_{\beta}$ is straightforward. In addition, $M_{\beta}$ is commutative; i.e., $\boldsymbol{l} \otimes \boldsymbol{m}=\boldsymbol{m} \otimes \boldsymbol{l}, \text{~for~all~} \boldsymbol{l}, \boldsymbol{m}\in M_{\beta}$. Not all elements in $M_{\beta}$ have corresponding inverses. Only the elements with the form $\{(a,1)\}$ have an inverse $\{(a^{-1},1)\}$. Thus, $M_{\beta}$ is a monoid and not a group. Although the inverse may not exist, we have the following theorem.

\begin{theorem}[Cancellative]
\label{thm:cancellativemonoid}
 If $\boldsymbol{x}, \boldsymbol{y}, \boldsymbol{a} \in M_{\beta}$ and $\boldsymbol{a} \otimes\boldsymbol{x}=\boldsymbol{a}\otimes\boldsymbol{y}$, then $\boldsymbol{x} = \boldsymbol{y}$.
\end{theorem}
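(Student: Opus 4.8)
The plan is to linearize the monoid operation by sending each thermomajorization curve to an exponential-sum generating function, so that $\otimes$ becomes ordinary pointwise multiplication and cancellation reduces to dividing by a function that never vanishes. To each curve $\boldsymbol{l}=\{(y_i,k_i)\}_{i=1}^{n}$ I associate
\begin{align}
L_{\boldsymbol{l}}(t) := \sum_{i=1}^{n} y_i\, k_i^{\,t}, \qquad t \in \mathbb{R}~.
\end{align}
The whole argument then rests on three properties of this assignment.

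First I would check \emph{multiplicativity}. The product $\boldsymbol{l}\otimes\boldsymbol{m}$ is formed from all pairs $(y_i^l y_j^m, k_i^l k_j^m)$ followed by merging segments of equal slope; since merging only regroups like terms in a sum, it leaves $L$ unchanged, and one gets $L_{\boldsymbol{l}\otimes\boldsymbol{m}}(t) = \big(\sum_i y_i^l (k_i^l)^t\big)\big(\sum_j y_j^m (k_j^m)^t\big) = L_{\boldsymbol{l}}(t)\,L_{\boldsymbol{m}}(t)$. Second is \emph{positivity}: because every height $y_i>0$ and every slope $k_i>0$, each summand $y_i k_i^{\,t}$ is strictly positive, so $L_{\boldsymbol{l}}(t)>0$ for all real $t$; in particular $L_{\boldsymbol{l}}$ is nowhere zero. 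Third is \emph{faithfulness}: the map $\boldsymbol{l}\mapsto L_{\boldsymbol{l}}$ is injective, since $L_{\boldsymbol{l}}(t)=\sum_i y_i e^{\,t\log k_i}$ is a combination of exponentials $t\mapsto e^{ts}$ with distinct exponents $s=\log k_i$ (the slopes are strictly ordered), and such exponentials are linearly independent, so the heights and slopes are recovered uniquely from $L_{\boldsymbol{l}}$.

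Granting these, the cancellation is immediate. From $\boldsymbol{a}\otimes\boldsymbol{x}=\boldsymbol{a}\otimes\boldsymbol{y}$ and multiplicativity I obtain $L_{\boldsymbol{a}}(t)\,L_{\boldsymbol{x}}(t)=L_{\boldsymbol{a}}(t)\,L_{\boldsymbol{y}}(t)$ for all $t$. By positivity $L_{\boldsymbol{a}}(t)>0$ everywhere, so I may divide to get $L_{\boldsymbol{x}}(t)=L_{\boldsymbol{y}}(t)$ for all $t$, and faithfulness then forces $\boldsymbol{x}=\boldsymbol{y}$.

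The only genuine content is faithfulness, the linear independence of distinct exponentials. I expect this to be the step to nail down, and the cleanest self-contained route is an asymptotic peeling argument: as $t\to+\infty$ the term with the largest slope dominates, so $L_{\boldsymbol{x}}=L_{\boldsymbol{y}}$ forces the top slopes and their heights to agree, after which one subtracts that leading term and induces downward on the slopes. This mirrors the segment-by-segment peeling in the proof of Theorem~\ref{thm: catalystwith0entropyproduction}, but here it is clean because the dominant slope is always unambiguous, whereas canceling $\boldsymbol{a}$ directly from $\boldsymbol{a}\otimes\boldsymbol{x}$ and $\boldsymbol{a}\otimes\boldsymbol{y}$ would require tracking which product slope becomes dominant after each peel. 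I would also remark that $L_{\boldsymbol{l}}$ is, up to a partition-function factor, the generating function of the exponentiated $\alpha$-R\'enyi divergences, so Theorem~\ref{thm:cancellativemonoid} dovetails with the characterization in Theorem~\ref{thm:coincide2}.
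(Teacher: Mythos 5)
Your proposal is correct, and it takes a genuinely different route from the paper. The paper proves cancellativity by peeling segments directly off the composite curves: the strictly largest slope of $\boldsymbol{a}\otimes\boldsymbol{x}$ is necessarily the unmerged product $k_1^a k_1^x$ with height $y_1^a y_1^x$, so equating it with the top segment of $\boldsymbol{a}\otimes\boldsymbol{y}$ gives $k_1^x=k_1^y$ and $y_1^x=y_1^y$; one then deletes that contribution from both sides and iterates. You instead linearize: the assignment $\boldsymbol{l}\mapsto L_{\boldsymbol{l}}(t)=\sum_i y_i k_i^t$ turns $\otimes$ into pointwise multiplication of nowhere-vanishing positive functions (merging under $\sim$ only regroups like terms, so it is invisible to $L$), cancellation becomes division by $L_{\boldsymbol{a}}$, and the only substantive step is injectivity of $L$, which follows from linear independence of distinct exponentials. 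The trade-off is real: the paper's argument is elementary and matches the peeling style of its proof of Theorem~\ref{thm: catalystwith0entropyproduction}, but it silently relies on the bookkeeping fact that, after each deletion, the largest remaining slope of the product curve is again an unmerged product of largest remaining slopes---exactly the subtlety you flag. Your version confines all peeling to a single curve, where the dominant exponent is unambiguous, so the induction is clean; it is also more structural, since $L_{\boldsymbol{f}_{\rho,H}}(\alpha-1)$ equals $Z^{1-\alpha}\exp\left((\alpha-1)D_{\alpha}(\rho\|\tau)\right)$, which makes the equivalence between curve coincidence and equality of all $\alpha$-divergences (Theorem~\ref{thm:coincide2}) an immediate corollary of the same injectivity statement rather than a separate argument through catalytic thermal operations.
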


\begin{proof}
The first element in $\boldsymbol{a}\otimes\boldsymbol{x}$ is $(y_{1}^{a} y_{1}^{x}, k_{1}^{a} k_{1}^{x})$ and the first element in $\boldsymbol{a}\otimes\boldsymbol{y}$ is $(y_{1}^{a} y_{1}^{y}, k_{1}^{a} k_{1}^{y})$. So, we have $y_{1}^{x}=y_{1}^{y}$ and $k_{1}^{x}=k_{1}^{y}$. Since we have:
\begin{align}
\boldsymbol{a} \otimes \boldsymbol{x} & = \boldsymbol{a} \otimes \boldsymbol{y} ~\text{and} \\
    \boldsymbol{a} \otimes (\boldsymbol{x}\setminus(y_{1}^{x},k_{1}^{x}))
    & = \boldsymbol{a} \otimes (\boldsymbol{y} \setminus (y_{1}^{y},k_{1}^{y}))
    ~,
\end{align}
we remove the same element on both sides. If we check the first element on both sides of the new equality, we have $y_{2}^{x}=y_{2}^{y}$ and $k_{2}^{x}=k_{2}^{y}$. Continuing this procedure, $y_{i}^{x} = y_{i}^{y}$ and $k_{i}^{x} = k_{i}^{y}$ for any $i$. Then we have $\boldsymbol{x}=\boldsymbol{y}$.
\end{proof}

These elementary facts allow exploring all possible work reservoirs for nondissipative state formation and work extraction. For state formation $(\tau,H)\to(\rho,H)$ with zero dissipation, we know the minimum segments of work reservoir's thermomajorization curve equal to the segments of $\rho$'s thermomajorization curve.

Suppose the corresponding initial work reservoir's thermomajorization curve is ${x}_{1}$. The final work reservoir's thermomajorization curve ${y}_{1}$ has only one segment. Thus, $\boldsymbol{y}_{1}$ has inverse $\boldsymbol{y}_{1}^{-1}$. Since there is no dissipation:
\begin{align}
\label{eqn:minimalworkreservoir}
    \boldsymbol{x}_{1} \otimes \boldsymbol{f}_{\tau,H } = \boldsymbol{y}_{1} \otimes \boldsymbol{f}_{\rho,H}
    ~.
\end{align}
Suppose there is another work reservoir suited for state formation whose initial and final thermomajorization curves are $\boldsymbol{f}_{{\rho}_{W},H_{W}}$ and $\boldsymbol{f}_{{\rho}_{W}',H_{W}}$. Then:
\begin{align}
\label{eqn:newworkreservoir}
    \boldsymbol{f}_{{\rho}_{W},H_{W}} \otimes \boldsymbol{f}_{\tau,H } = \boldsymbol{f}_{{\rho}_{W}',H_{W}} \otimes \boldsymbol{f}_{\rho,H}
    ~.
\end{align}
Multiply $\boldsymbol{x}_{1}$ on both sides of Eq. \eqref{eqn:newworkreservoir} and use Theorem \ref{thm:cancellativemonoid} to remove $\boldsymbol{f}_{\rho,H}$. Then:
\begin{align}
\boldsymbol{x}_{1}\otimes\boldsymbol{f}_{{\rho}_{W}',H_{W}}=\boldsymbol{y}_{1}\otimes\boldsymbol{f}_{{\rho}_{W},H_{W}}
~.
\end{align}
Since $\boldsymbol{y}_{1}$ has an inverse:
\begin{align}
\boldsymbol{f}_{{\rho}_{W},H_{W}} & = \boldsymbol{x}_{1}\otimes\boldsymbol{f}_{{\rho}_{W}',H_{W}} \otimes\boldsymbol{y}_{1}^{-1} \\
    & = \boldsymbol{b}\otimes\boldsymbol{x}_{1}
  ~,
\end{align}
where $\boldsymbol{b} = \boldsymbol{f}_{{\rho}_{W}',H_{W}} \otimes \boldsymbol{y}_{1}^{-1}$ or  $\boldsymbol{f}_{{\rho}_{W}',H_{W}} = \boldsymbol{b} \otimes \boldsymbol{y}_{1}$.

So, we write any general work reservoirs $\boldsymbol{f}_{{\rho}_{W},H_{W}}$ and $\boldsymbol{f}_{{\rho}_{W}',H_{W}}$ in terms of $\boldsymbol{x}_{1}$ and $\boldsymbol{y}_{1}$:
\begin{align}
\label{eqn:generalworkcurveofstateformation}
    \boldsymbol{f}_{{\rho}_{W},H_{W}} &= \boldsymbol{b} \otimes \boldsymbol{x}_{1} \nonumber \\
    \boldsymbol{f}_{{\rho}_{W}',H_{W}} &= \boldsymbol{b} \otimes \boldsymbol{y}_{1}
    ~.
\end{align}
This means the initial thermomajorization curve must be equal to the product of $\boldsymbol{x}_{1}$ and an arbitrary curve $\boldsymbol{b}$ and the final thermomajorization curve must equal the product of $\boldsymbol{y}_{1}$ and curve $\boldsymbol{b}$. These are the most general thermomajorization curves of the work reservoir for state formation with zero dissipation.

Next, we express this relation in terms of $\alpha-$R\'enyi divergences. Recall the definition of the $\alpha-$free energy of state $\rho$:
\begin{align}
F_{\alpha}(\rho) & = F_{eq}+\kt D_{\alpha}(\rho \| \tau) \\
    & = F_{eq} + \kt \frac{1}{\alpha -1}
    \log \left( \sum _{i=1}^{n}{\frac {p_{i}^{\alpha }}{q_{i}^{\alpha -1}}} \right)
  ~,
  \end{align}
where $\{p_{i}\}_{i=1}^{n}$ and $\{q_{i}\}_{i=1}^{n}$ are population vectors of state $\rho$ and Gibbs distribution and $F_{eq}=-\kt \log Z$ is the equilibrium free energy. The $\alpha-$free energy only depends on the thermomajorization curve's elbow points. Suppose $\rho$'s thermomajorization curve is $\boldsymbol{f}_{\rho,H}=\{(y_{i},k_{i})\}_{i=1}^{n}$, then:
\begin{align}
    D_{\alpha}(\rho \| \tau) &= \frac{1}{\alpha -1}\log {\Bigg (}\sum _{i=1}^{n}{\frac {p_{i}^{\alpha }}{(e^{-\beta \epsilon_{i}})^{\alpha -1}}} Z^{\alpha-1}{\Bigg )} \\
    &= \frac{1}{\alpha -1}\log {\Bigg (}\sum _{i=1}^{n}{\frac  {p_{i} p_{i}^{\alpha -1 }}{(e^{-\beta \epsilon_{i}})^{\alpha -1}}} Z^{\alpha-1}{\Bigg )} \\
    &= \frac{1}{\alpha -1}\log {\Bigg (}\sum _{i=1}^{n} {y_{i} k_{i}^{\alpha -1 }} Z^{\alpha-1}{\Bigg )}
    ~.
\end{align}
For any state $\rho$ and its thermomajorization curve $\boldsymbol{a}$, we use $F_{\alpha}(\boldsymbol{a})=F_{\alpha}(\rho)$ to denote the $\alpha-$free energy. For the general work curves $\boldsymbol{f}_{{\rho}_{W},H_{W}}$ and $\boldsymbol{f}_{{\rho}_{W}',H_{W}}$, from Eqs. \eqref{eqn:minimalworkreservoir} and \eqref{eqn:generalworkcurveofstateformation}, we have:
\begin{align}
    F_{\alpha}(\boldsymbol{x}_{1}) +F_{\alpha}(\tau) &= F_{\alpha}(\boldsymbol{y}_{1}) + F_{\alpha} (\rho) \\
    F_{\alpha}(\rho_{W}) &= F_{\alpha}(\boldsymbol{x}_{1}) + F_{\alpha}(\boldsymbol{b}) \\
    F_{\alpha}({\rho}_{W}') &= F_{\alpha}(\boldsymbol{y}_{1}) + F_{\alpha}(\boldsymbol{b})
    ~.
\end{align}
To remove $F_{\alpha}(\boldsymbol{b})$, we have:
\begin{align}
    \frac{e^{F_{\alpha}({\rho}'_{W})}}{e^{F_{\alpha}({\rho}_{W})}} = \frac{e^{F_{\alpha}(\boldsymbol{y}_{1})}}{e^{F_{\alpha}(\boldsymbol{x}_{1})}}=\frac{e^{F_{\alpha}(\tau)}}{e^{F_{\alpha}(\rho)}}
    ~,
\end{align}
where:
\begin{align}
   e^{F_{\alpha}(\tau)}& = e^{F_{eq}} \\
    e^{F_{\alpha}(\rho)}& = e^{F_{eq}}\cdot Z_{S}{\Big(}\sum_{i} p_{i} m_{i}^{\alpha -1}{\Big)}^\frac{1}{\alpha-1}
\end{align}
and $\boldsymbol{f}_{\rho, H}=\{(p_{i},m_{i})\}_{i}$. Then:
\begin{align}
    \frac{e^{F_{\alpha}({\rho}_{W}')}}{e^{F_{\alpha}({\rho}_{W})}} = \frac{{\Big(}\sum_{i} {p}_{i} {m}_{i}^{\alpha -1}{\Big)}^\frac{1}{1-\alpha}}{Z_{S}}
    .
\end{align}

This relation bridges between the work reservoir and the system and, thus, is a Jarzynski-like equality in the nondissipative scenario. Thus, from information about work we learn system transitions \cite{jarzynski1997nonequilibrium}. For a general nondissipative state transition, we cannot write the general work reservoir thermomajorization curves as in Eq. \eqref{eqn:generalworkcurveofstateformation}.

\begin{figure}[!t]
    \centering
    \scalebox{0.164}{
    \includegraphics[]{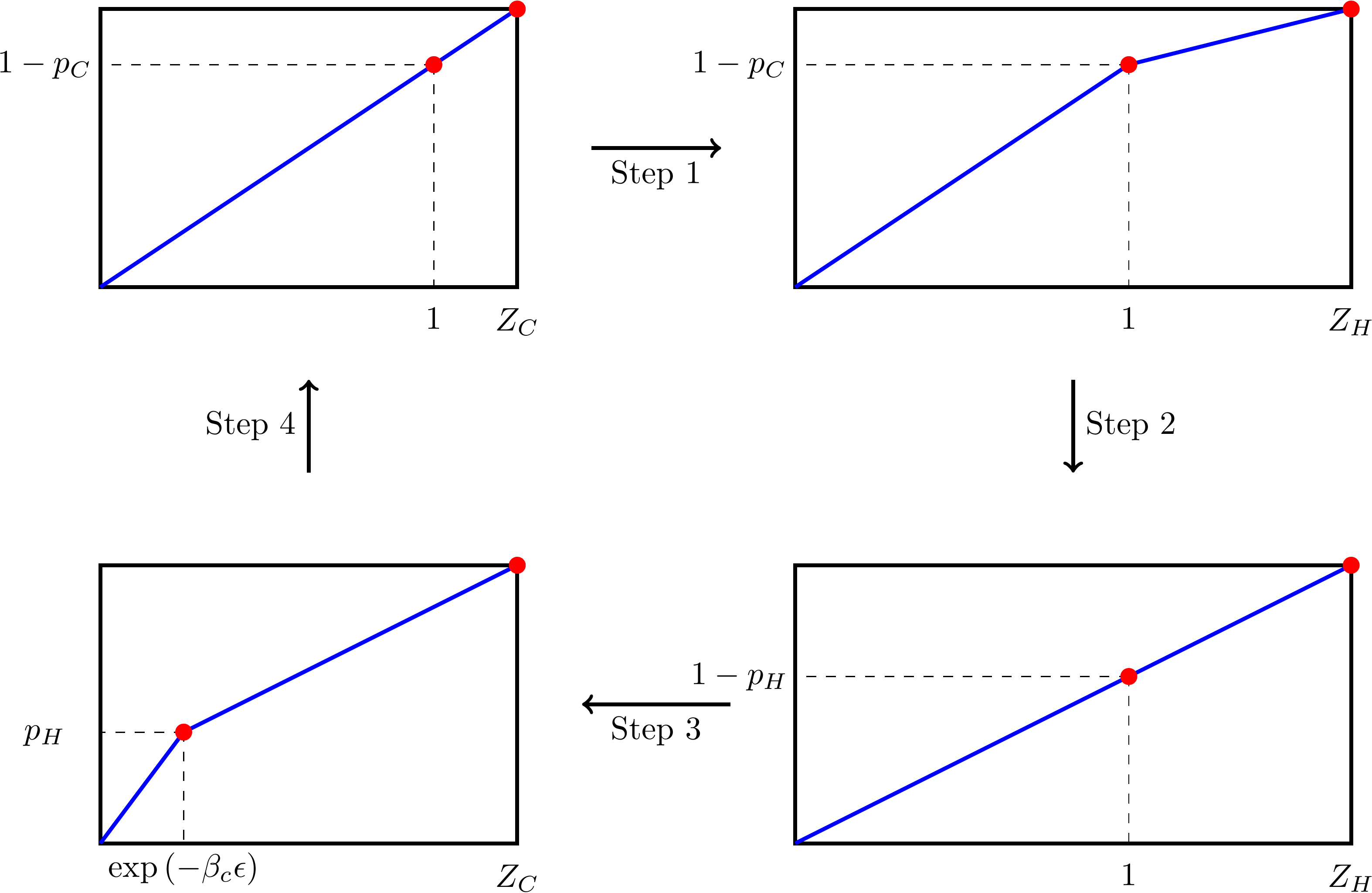}
    }
\caption{Thermomajorization curves for each stage of the qubit engine,  where $p_{C}=e^{-\beta_{C}\epsilon}/Z_{C}$ and $p_{H}=e^{-\beta_{H}\epsilon}/Z_{H}$. $Z_{C}$ and $Z_{H}$ are partition functions of the engine at temperature $T_{C}$ and $T_{H}$, respectively. }
\label{fig:Carnotcycle}
\end{figure}

\begin{table*}[t]
    \centering
    \scalebox{0.85}{
    \begin{tabular}{|c|c|c|c|}
    \hline 
        & & & \\
        & $\rho_{W1}$ & $\rho_{W2}$ & $\rho_{W3}$ \\
        & & & \\
    \hline
        & & & \\
    $p_{C}p_{H}$ & $-\frac{1}{\beta_{H}}\log(c_{1}p_{C})-\frac{1}{\beta_{C}}\log(c_{2}p_{H})$ & $-\frac{1}{\beta_{H}}\log(c_{1}p_{H})-\frac{1}{\beta_{C}}\log(c_{2}p_{H})$ & $-\frac{1}{\beta_{H}}\log(c_{1}p_{H})-\frac{1}{\beta_{C}}\log(c_{2}p_{C})$ \\
        & & & \\
    \hline
    & & & \\
    $p_{C}(1-p_{H})$ & $-\frac{1}{\beta_{H}}\log(c_{1}p_{C})-\frac{1}{\beta_{C}}\log(c_{2}(1-p_{H}))$ & $-\frac{1}{\beta_{H}}\log(c_{1}p_{H})-\frac{1}{\beta_{C}}\log(c_{2}(1-p_{H}))$ & $-\frac{1}{\beta_{H}}\log(c_{1}p_{H})-\frac{1}{\beta_{C}}\log(c_{2}(1-p_{C}))$\\
    & & & \\
    \hline
    & & & \\
    $(1-p_{C})p_{H}$ & $-\frac{1}{\beta_{H}}\log(c_{1}(1-p_{C}))-\frac{1}{\beta_{C}}\log(c_{2}p_{H})$ & $-\frac{1}{\beta_{H}}\log(c_{1}(1-p_{H}))-\frac{1}{\beta_{C}}\log(c_{2}p_{H})$ & $-\frac{1}{\beta_{H}}\log(c_{1}(1-p_{H}))-\frac{1}{\beta_{C}}\log(c_{2}p_{C})$\\
        & & & \\
    \hline
        & & & \\
    $(1-p_{C})(1-p_{H})$ & $-\frac{1}{\beta_{H}}\log(c_{1}(1-p_{C}))-\frac{1}{\beta_{C}}\log(c_{2}(1-p_{H}))$& $-\frac{1}{\beta_{H}}\log(c_{1}(1-p_{H}))-\frac{1}{\beta_{C}}\log(c_{2}(1-p_{H}))$ & $-\frac{1}{\beta_{H}}\log(c_{1}(1-p_{H}))-\frac{1}{\beta_{C}}\log(c_{2}(1-p_{C}))$ \\
        & & & \\
    \hline
    \end{tabular}
    }
\caption{Qubit engine efficient work reservoir energy levels: Here, we combine $W_{C}$ and $W_{H}$ into a single work reservoir. The work reservoir begins with $\rho_{W1}$. In step 2, the work reservoir changes from $\rho_{W1}$ to $\rho_{W2}$. And, in step 4, the work reservoir changes from $\rho_{W2}$ to $\rho_{W3}$. The nonzero components of probability distributions are $p_{C}p_{H}$, $p_{C}(1-p_{H})$, $(1-p_{C})p_{H}$, and $(1-p_{C})(1-p_{H})$. We list the corresponding energy levels in each work reservoir state, where $c_{1}$ and $c_{2}$ are two arbitrary positive constants.}
\label{tab:workconstructioncarnot}
\end{table*}

\begin{table}[h]
\scalebox{0.25}{
\includegraphics[]{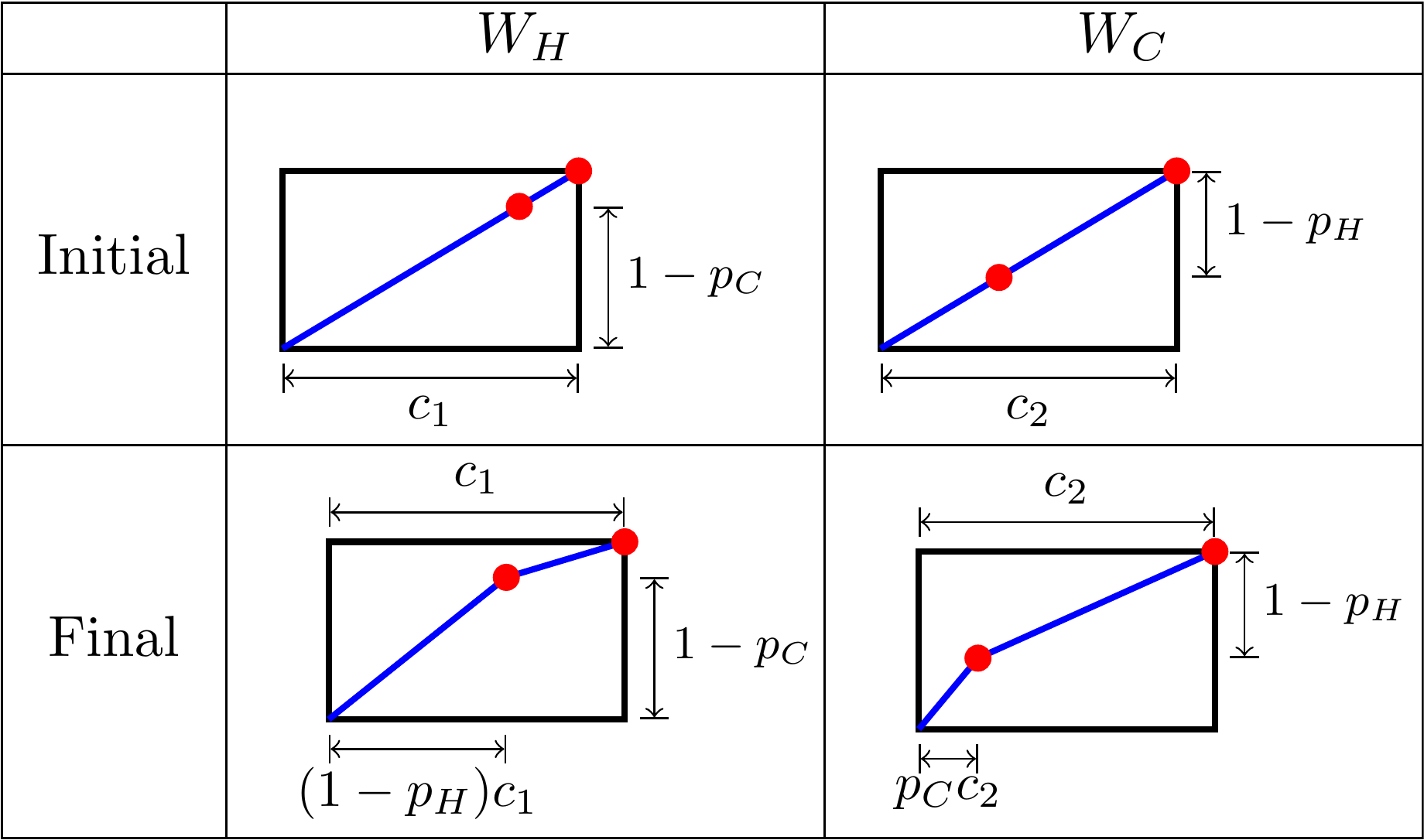}
}
\caption{Qubit engine thermomajorization curves of initial and final $W_{H}$ and $W_{C}$. $c_{1}$ and $c_{2}$ are arbitrary positive numbers. Here, we ignore flat portions in thermomajorization curves.}
\label{tab:workreservoirincarnotcycle}
\end{table}

\section{Carnot engines with efficient reservoirs}
\label{appendix:engine}

The following introduces a qubit engine implemented with efficient work reservoirs that executes a Carnot cycle. Note that when implemented with only two-level work reservoirs, the engine's efficiency is strictly vanishing \cite{woods2019maximum}.

In our setup, there are two thermal baths at temperatures $T_{C}$ and $T_{H}$ ($T_{C} <T_{H}$), two work reservoirs $W_{C}$ and $W_{H}$---that can be combined into one---and a system used as an engine. Since our engine and work reservoir run without dissipation, engine efficiency is $\eta=1-T_{C} / T_{H}$. The qubit engine's Hamiltonian is $H_{S}=\epsilon \ra{1}\la{1}$.

Initially, the engine is in thermal state $\tau_{C}$ at temperature $T_{C}$, being in contact with the cold bath. Next, $\tau_{C}$ is brought to the hot bath (Step 1) to extract work with work reservoir $W_{H}$ and ends in thermal state $\tau_{H}$ at temperature $T_{H}$ (Step 2). The work extracted from the hot thermal bath is $W_{H}=\kt_{H}D(\tau_{C}||\tau_{H})$. Then, the system returns to the cold bath (Step 3) and extracts work with reservoir $W_{C}$ and ends in thermal state $\tau_{C}$ at temperature $T_{C}$ (Step 4). The work that can be extracted from the cold thermal bath is $W_{C}=\kt_{C}D(\tau_{H}||\tau_{C})$. The cycle completes when the engine returns to the thermal state at $T_{C}$.

Now, let's construct the corresponding work reservoir for this Carnot cycle. The work reservoir's state only changes during steps 2 and step 4. Step 2 is a work extraction process. We use the minimal work reservoir $W_{1}$ to extract work without dissipation. In step 4, we also use the minimal work reservoir $W_{2}$ to extract work without dissipation. Figure \ref{fig:Carnotcycle} shows the system's thermomajorization curves for each step. And, Table \ref{tab:workreservoirincarnotcycle} shows the work reservoirs $W_{C}$ and $W_{H}$ used in the Carnot cycle. We can combine $W_{C}$ and $W_{H}$ into a single work reservoir. (See Table \ref{tab:workconstructioncarnot} for details.) Since there is no dissipation in steps 2 and step 4, the heat transferred to the hot bath $Q_{H}$ during step 2 and to the cold bath $Q_{C}$ during step 4 satisfy:
\begin{align}
    \beta_{H} Q_{H} + S(\tau_{H}) - S(\tau_{C}) &= 0 \\
    \beta_{C} Q_{C} + S(\tau_{C}) - S(\tau_{H}) &= 0
    ~.
\end{align}
Then we have $\beta_{C}Q_{C}+\beta_{H}Q_{H}=0$.

From energy conservation, the work done in one cycle is given by $W=-Q_{H}-Q_{C}$. And, the efficiency of this cycle is given by:
\begin{align}
    \eta & = \frac{W}{-Q_{H}} \\
    & = \frac{-Q_{H}-Q_{C}}{-Q_{H}} \\
    & =1-\frac{T_{C}}{T_{H}}
    ~.
\end{align}
For any engine operating with efficient work reservoirs, we always have:
\begin{align}
\beta_{C}Q_{C}+\beta_{H}Q_{H}=0
  ~.
\end{align}
As a result, the efficiency of an engine with efficient work reservoirs is always the Carnot efficiency $1-T_{C}/T_{H}$. Similar results are considered in Ref. \cite{bera2021attaining}.
\section{Realization of efficient state transitions}\label{appenix:finiterealization}
In this work, we study the possibility of realizing the state transitions by using multi-level work reservoirs. We may ask whether it is possible to construct a explicit joint unitary operator $U$ on the system, work reservoir plus the thermal bath to implement the transition. In this appendix, we give one example. 

We consider the famous Landauer's erasure beginning with the probability distribution $(\frac{1}{2},\frac{1}{2})$ storing in a trivial two-level system spanned by $\{ \ra{0},\ra{1}\}$ with Hamiltonian $H_{S}=0$. The work reservoir's is a two level system spanned by $\{ \ra{g},\ra{e}\}$ with Hamiltonian $H_{W}=\epsilon \ra{e}\la{e}$. Let us consider a special thermal bath with energy levels $\{0,1\epsilon,\cdots,N\epsilon\}$ of which corresponding degeneracy are $\{2^{0},2^{1},\cdots,2^{N}\}$, respectively. The Hamiltonian of the bath is 
\begin{align}
    H_{B}= \sum_{n=0}^{N}\sum_{i=1}^{2^{n}} n \epsilon \ra{n,i}\la{n,i}~
\end{align}
where $\ra{n,i}$ is the $i-$th degenerated eigenstate with the eigenvalue $n \epsilon$. The partition function of the bath is
\begin{align}
    Z_{B} = \frac{1-e^{-(N+1)\delta}}{1-e^{-\delta}}
\end{align}
where $\delta = \beta\epsilon - \log2$. 
We construct the joint energy preserving unitary $U$ such that
\begin{align}
    U \ra{0e} \otimes \ra{n,i} &= \ra{0g} \otimes \ra{n+1,i} \\
    U \ra{1e} \otimes \ra{n,i} &= \ra{0g} \otimes \ra{n+1,2^{n}+i} 
\end{align}
for $n={0,1,\cdots,N-1}$ and for $n=N$
\begin{align}
    U \ra{0e} \otimes \ra{N,i} &= \ra{0e} \otimes \ra{N,i} \\
    U \ra{1e} \otimes \ra{N,i} &= \ra{1e} \otimes \ra{N,i}~. 
\end{align}
There are undetermined degrees of freedom in this unitary operator $U$. But the conditions above are sufficient to determine our final state if the initial state is $\rho_{SWB}= (\frac{1}{2} \ra{0}\la{0}+\frac{1}{2} \ra{1}\la{1})\otimes \ra{e}\la{e} \otimes \tau_{B}$. The final state of the system and work reservoir is given by tracing out the thermal bath degrees of freedom
\begin{align}
    \rho'_{SW}= \mathrm{Tr}_{B}\big( U \rho_{SWB} U^{\dagger}\big)~.
\end{align}
This leads to the final state of system plus work reservoir
\begin{align}
    \rho'_{SW}= p\ra{0g}\la{0g}+\frac{1}{2}(1-p)\ra{0e}\la{0e}+\frac{1}{2}(1-p)\ra{1e}\la{1e}
\end{align}
where
\begin{align}
    p = \frac{1-e^{-N(\beta \epsilon - \log2)}}{1-e^{-(N+1)(\beta \epsilon - \log2)}}=\frac{1-e^{-N\delta}}{1-e^{-(N+1)\delta}}~.
\end{align}

The norm-1 distance between $\rho_{SW}'$ and desired final state $\ra{0g}\la{0g}$ is
\begin{align}
    ||\rho_{SW}'-\ra{0g}\la{0g}||_{1}= 2 (1-p)~.
\end{align}
The energy change in the work reservoir is 
\begin{align}
    W = -\kt(\delta+\log2)\frac{1-e^{-N\delta}}{1-e^{-(N+1)\delta}}~.
\end{align}
As long as $\delta=\beta \epsilon -\log2 >0$, the partition function of the bath is finite for any $N$. And the norm-1 distance can be arbitrarily small as $N\to\infty$. The corresponding energy change in the work reservoir can be arbitrarily close to $\kt \log2$.

\section{Correlated catalysts for trivial Hamiltonian}\label{Appendix:correlatedcatalyst}
In this section, we show an interesting result on correlated catalysts. The catalysts can be used to decrease the entropy productions in state transitions in trivial Hamiltonian. 

\begin{theorem}
    Consider a $m$-dimensional system with Hamiltonian $H=0$. Given a state $\rho_{S}$, a thermal operation $\mathcal{E}$ and $\mathcal{E}(\rho_{S})=\sigma_{S}$, then there exists a catalyst $\omega_{C}$ such that
    \begin{enumerate}
        \item $\mathcal{T}(\rho_{S} \otimes \omega_{C})=\sigma_{SC}$
        \item $\mathrm{Tr}_{C} \sigma_{SC}=\sigma_{S}$ and $\mathrm{Tr}_{S} \sigma_{SC}=\omega_{C}$
        \item The entropy production of $\rho_{S} \otimes \omega_{C} \to \mathcal{T}(\rho_{S} \otimes \omega_{C})$ is 0.
    \end{enumerate}
\end{theorem}

We use probability distribution to replace the density matrix and prove a theorem first.

\begin{theorem}\label{thm:catalystwithH=0}
    Let $p_{X}, q_{X}$ be distributions on a finite set $X$ and $\sigma_{Y}$ be probability distribution on a finite set $Y$ and $T$ be a doubly stochastic matrix  such that $T \cdot p_{X}=q_{X}$. Then there exists a distribution $q_{XY}$ on $X \times Y$ such that
    \begin{align}
        p_{X} \otimes \sigma_{Y} =q_{XY}~
    \end{align}
    and $q_{XY}$'s marginal distribution of $X$ is $q_{X}$ and the marginal distribution of $Y$ is $\sigma_{Y}$.
    Here, two probability distribution being equal to each other means that they are same up to a reorder.
\end{theorem}

\begin{proof}
    We prove this by directly constructing $q_{XY}$. From the Birkhoff–von Neumann theorem, any doubly stochastic matrix $T$ can be written as a convex combination of permutation matrices
    \begin{align}
        T = \sum_{i=1}^{\alpha} \theta_{i} P_{\pi_i}
    \end{align}
    where $\sum_{i}\theta_{i}=1$ and $P_{\pi_i}$ is the permutation matrix corresponding to permutation $\pi_{i}$. Without loss of generality, we can assume $\theta_{i}$ are rational, i.e., $\theta_{i}=m_{i}/N$ where $m_{i},N \in \mathbb{Z}$.
    Here, we take 
    \begin{align}
        \sigma_{Y}=\frac{1}{N}\underbrace{(1,\cdots,1)}_{ \#~ \text{of }1s=N}~.
    \end{align}
    And we introduce a $m\times N$ matrix $C$ to express $p_{X}\otimes \sigma_{Y}$
    \begin{align}
        p_{X}\otimes \sigma_{Y}=C=\begin{pmatrix}
            \frac{p_{X1}}{N} & \cdots &\frac{p_{X1}}{N} \\
            \vdots & \vdots &\vdots \\
            \frac{p_{Xm}}{N} & \cdots &\frac{p_{Xm}}{N} \\
        \end{pmatrix}~.
    \end{align}
    In this matrix, if we add all components in each column together, we have probability distribution $\sigma_{Y}$ and if we add all components in each row together, we have probability distribution $p_{X}$. We know 
    \begin{align}
        q_{X} &= T\cdot p_{X} = \sum_{i} \theta_{i} P_{\pi_i}\cdot p_{X} = \sum_{i} \frac{m_{i}}{N} P_{\pi_i}\cdot p_{X}\nonumber \\
        &=\sum_{i} \frac{m_{i}}{N}  \pi_{i}(p_{X})~
    \end{align}
    where $\pi_{i}(p_{X})$ is the probability distribution after the permutation $\pi_{i}$ taking effect on $p_{X}$. Now we permute components in each column of $C$ to get $C'$
    \begin{align}
        C'=\begin{pmatrix}
            \pi_{1}(p_{X})/N &  \cdots &\pi_{\alpha}(p_{X})/N
        \end{pmatrix}
    \end{align}
    where for each $\pi_{i}(p_{X})$ there are $m_{i}$ copies in $C'$. We let $q_{XY}$ be $C'$. Since we only permute components in each column, so we have same $\sigma_{Y}$ after we trace out $X$ and we have $q_{X}$ after we trace out $Y$. We only reorder the components in the matrix so we have
    \begin{align}
        p_{X}\otimes \sigma_{Y} =q_{XY}~.
    \end{align}
\end{proof}

From the theorem proved above, we see that for any state $\rho$, $\sigma$ and let $\mathcal{E}$ be a thermal operation, such that $\mathcal{E}(\rho)=\sigma$. We can find a catalyst $\omega$ such that the entropy production of the transition $\rho\otimes\omega \to \sigma' $ through a thermal operation is 0 and $\omega = \mathrm{tr}_1[\sigma']$, $\sigma = \mathrm{tr}_2[\sigma']$ if the Hamiltonian of the system is trivial. From Thm. \ref{thm:catalystwithH=0}, we can achieve this by only reordering the eigenvalues of $\rho \otimes \omega$. So we can even only use a unitary operator $U$ to achieve this, i.e., $U\rho \otimes \omega U^{\dagger}=\sigma'$. For nontrivial Hamiltonian in general, this is not correct. We cannot do arbitrary permutations since it violates energy conservation.

\clearpage

\bibliography{ref}

\end{document}